\documentclass[12pt,a4paper]{article}

\usepackage{authblk}

\usepackage{a4wide}

\usepackage{color}
\usepackage{tikz}
\usepackage{textcomp}

\usepackage{algorithm}
\usepackage{algpseudocode}

\usepackage{a4wide}
\usepackage{textcomp}

\usepackage[english]{babel}

\usepackage{amsmath, amssymb, amsfonts}
\usepackage{fancyvrb}
\usepackage{graphicx}

\usepackage[pagestyles]{titlesec}
\newpagestyle{copyright}{
\setfoot[][][\footnotesize\copyright~B.~Wouters (UvA), 2019 \hfill \mbox{}] 
{}{}{\hfill \mbox{} \footnotesize\copyright~B.~Wouters (UvA), 2019}
}

\numberwithin{equation}{section}

\usepackage{transparent}

\usepackage{centernot}

\usepackage{diagbox}

\usepackage{bm}

\newcommand{\dd}[1]{\,\text{d}{#1}}

\newcommand{\Ex}[1]{\mathrm{E}\!\left[{#1}\right]}

\newcommand{\prob}[1]{\text{P}\!\left[{#1}\right]}
\newcommand{\Var}[1]{\text{Var}\!\left[{#1}\right]}

\usepackage{color}

\VerbatimFootnotes

\usepackage[hyperfootnotes=false, hidelinks]{hyperref}

\newcommand{\bu}{{\bf u}}
\newcommand{\bv}{{\bf v}}
\newcommand{\bx}{{\bf x}}
\newcommand{\by}{{\bf y}}
\newcommand{\bz}{{\bf z}}

\newcommand{\bvarepsilon}{{\bm{\varepsilon}}}
\newcommand{\btheta}{{\bm{\theta}}}
\newcommand{\bphi}{{\bm{\phi}}}

\newcommand{\bmu}{{\bm{\mu}}}

\newcommand{\bxi}{{\bm{\xi}}}

\newcommand{\bchi}{{\bm{\chi}}}

\usepackage{mathtools}
\DeclarePairedDelimiter{\norm}{\lVert}{\rVert}

\newcommand{\byt}{{\bf y}_t}
\newcommand{\bxt}{{\bf x}_t}
\newcommand{\bepst}{\bm{\varepsilon}_t}
\newcommand{\bepspara}{\bm{\varepsilon}^\parallel}
\newcommand{\bepstpara}{\bm{\varepsilon}_t^\parallel}
\newcommand{\bepsperp}{\bm{\varepsilon}^\perp}
\newcommand{\bepsperphat}{\hat{\bm{\varepsilon}}^\perp}
\newcommand{\bepstperp}{\bm{\varepsilon}_t^\perp}
\newcommand{\bepstc}{\bm{\varepsilon}_t^\mathrm{c}}

\newcommand{\bxit}{\bm{\xi}_t}
\newcommand{\bchit}{\bm{\chi}_t}
\newcommand{\bxiepst}{\bxit^{(\varepsilon)}}
\newcommand{\bzeta}{\bm{\zeta}}
\newcommand{\bzetat}{\bm{\zeta}_t}

\newcommand{\bepstnull}{\bm{\varepsilon}_t^\mathrm{\bf o}}
\newcommand{\bet}{\mathrm{\bf e}_t}

\newcommand{\bybar}{\bar{\bf y}}

\newcommand{\bxtden}{{\bf x}_t^\mathrm{den}}
\newcommand{\bxtopt}{{\bf x}_t^\mathrm{opt}}

\newcommand{\bxthatden}{\hat{\bf x}_t^\mathrm{den}}
\newcommand{\bxthatopt}{\hat{\bf x}_t^\mathrm{opt}}

\newcommand{\Sigmay}{\Sigma_\by}
\newcommand{\Sigmayhat}{\hat{\Sigma}_\by}
\newcommand{\Sigmaylagged}[1]{\Sigma_\by({#1})}
\newcommand{\Sigmaylaggedhat}[1]{\widehat{\Sigma}_\by({#1})}
\newcommand{\Sigmaylaggedtilde}[1]{\tilde{\Sigma}_\by({#1})}
\newcommand{\Sigmax}{\Sigma_\bx}
\newcommand{\Sigmaxlagged}[1]{\Sigma_\bx({#1})}
\newcommand{\Sigmaeps}{\Sigma_\bvarepsilon}
\newcommand{\Sigmaepspara}{\Sigma_{\bepspara}}
\newcommand{\Sigmaepsperp}{\Sigma_{\bepsperp}}
\newcommand{\Sigmaepsperphat}{\widehat{\Sigma}_{\bepsperp}}
\newcommand{\Sigmaepslagged}[1]{\Sigma_\bvarepsilon({#1})}
\newcommand{\Sigmaxi}{\Sigma_\bxi}
\newcommand{\Sigmaxilagged}[1]{\Sigma_\bxi({#1})}
\newcommand{\Sigmachi}{\Sigma_\bchi}
\newcommand{\Sigmachihat}{\hat{\Sigma}_\bchi}
\newcommand{\Sigmaxieps}{\Sigma_{\bxi^\varepsilon}}

\newcommand{\Vhat}{\hat{V}}
\newcommand{\Khat}{\widehat{K}}
\newcommand{\Wperphat}{\widehat{W}_\perp}

\newcommand{\mse}{\textbf{MSE}}
\newcommand{\Tr}[1]{\text{Tr}\!\left[{#1}\right]}

\newcommand{\Pden}{P_\mathrm{den}}
\newcommand{\Popt}{P_\mathrm{opt}}
\newcommand{\Portho}{P_\mathrm{ortho}}

\newcommand{\Popthat}{\hat{P}_\mathrm{opt}}

\newcommand{\normsq}[1]{\norm{ {#1} }^2}

\newcommand{\bigO}[1]{O \left( {#1} \right)}
\newcommand{\bigOp}[1]{O_\mathrm{P} \! \left( {#1} \right)}

\newcommand{\tracevar}[1]{\mathrm{Tr} \! \left[ \mathrm{Var} \! \left[ {#1} \right] \right]}

\usepackage{amsthm}
\newtheorem{theorem}{Theorem}
\newtheorem{assumption}{Assumption}
\newtheorem{lemma}{Lemma}

\newcommand{\bZ}{{\bf Z}}

\newcommand{\bZt}{{\bf Z}_t}

\usepackage[round,comma,authoryear]{natbib}

\title{Model-agnostic noise reduction for high-dimensional time series data}

\author{Bram Wouters}
\author{Cees Diks}
\affil{University of Amsterdam}

\begin{document}

\maketitle

\begin{abstract}
\noindent 
We develop a model-agnostic framework for noise reduction in high-dimensional time series that explicitly targets optimal recovery of a low-dimensional latent dynamic component contaminated by observational white noise. Under the assumption that the latent dynamics live in a low-dimensional linear dynamic subspace,
we characterize the optimal linear projection onto the dynamic subspace and provide a geometric description of the residual error in terms of the relative orientation of the signal and noise spaces. We propose estimators for the dynamic subspace and the optimal projection based on lagged covariance matrices, bootstrap dimension selection, and a low-rank representation of the structured noise. Under mild conditions, the resulting denoised series is shown to converge to its population target at the usual parametric rate. Simulations show that the proposed method can substantially improve subspace estimation, reconstruction error, and one-step-ahead forecast accuracy compared with both orthogonal projection-based denoising and the raw data. The approach is illustrated by empirical applications to high-dimensional stock returns and to a 20-variate time series of macroeconomic indicators. Code is available at \url{https://github.com/brwo/NoiseReductionForTimeSeries}.
\end{abstract}

\newpage

\newpage
\section{Introduction} \label{sec:introduction}

High-dimensional time series are ubiquitous in statistics and econometrics. Examples include panels of macroeconomic indicators, large portfolios of financial assets, spatially indexed climate or environmental fields, and collections of biomedical or sensor signals. In such settings, it is often natural to view the observed process as a noisy manifestation of a lower-dimensional dynamic system: a small number of latent components drive most of the serial dependence, while the remaining variation reflects measurement error, idiosyncratic shocks, or other forms of high-dimensional noise. Accurately extracting this latent dynamic signal is essential for forecasting, structural interpretation, and policy analysis.

Dynamic factor models provide a prominent framework for capturing this structure. In macroeconomics, large approximate factor models have become standard for summarizing many predictors with a few diffusion indexes and for improving forecasts; see, for instance, \citet{Stock2002,Stock2002a} and \citet{Bai2002,Bai2007}. In a complementary spectral approach, generalized dynamic factor models based on the eigenstructure of the spectral density matrix have been developed by \citet{Forni2000} and further refined in \citet{Forni2005}. \citet{Hallin2007} proposed information criteria to determine the number of common shocks in such models. For high-dimensional vector time series, \citet{Pan2008} and \citet{Lam2011,Lam2012} showed how to estimate a low-dimensional dynamic factor space using lagged autocovariances in a way that is robust to additive white noise. Related ideas have also been used to study the number of factors in very high-dimensional settings, for example in \citet{Xia2018}.

Similar concepts arise in functional time series, where each observation is a function (or curve) rather than a finite-dimensional vector. \citet{Bathia2010} studied the problem of identifying a finite-dimensional dynamic subspace underlying a curve time series and proposed bootstrap-based tests for its dimension. Functional factor approaches have been successfully used for forecasting age-specific mortality and fertility rates in the presence of measurement errors and outliers \citep{Hyndman2007}. These developments underscore the broad appeal of viewing complex data as lying near a low-dimensional dynamic structure embedded in a noisy high-dimensional space.

At the same time, there is a substantial and largely parallel literature on denoising and regularization in high dimensions. In the context of independent observations, wavelet-based signal denoising pioneered by \citet{Donoho1994,Donoho1995} showed that sparse representations can yield near-optimal risk properties.  In high-dimensional covariance estimation, shrinkage and factor-based estimators have been proposed to stabilize eigenvalues and improve conditioning; see, for example, \citet{Ledoit2004} and \citet{Fan2011,Fan2013}. These methods highlight the importance of exploiting structure, either in the underlying signal or in the covariance matrix, to reduce noise effectively in high dimensions.

Most of the dynamic factor literature has focused on two main tasks: determining the dimension and orientation of the low-dimensional dynamic space, and using the associated factors for forecasting or structural analysis. The noise is typically treated as an idiosyncratic remainder that is uncorrelated over time and relatively small compared with the common component. In practice, however, the noise can be large, structured, and partially aligned with the signal directions. Orthogonally projecting the data onto an estimated factor space therefore does not necessarily yield an optimally denoised series. By contrast, many denoising and covariance regularization methods are not tailored to exploit the specific combination of temporal dependence and cross-sectional factor structure that arises in high-dimensional time series.

The aim of this paper is to develop and analyze a noise reduction method for high-dimensional time series that explicitly targets optimal recovery of the low-dimensional dynamic component, while remaining agnostic about the detailed parametric form of its dynamics. We assume only that there exists a low-dimensional linear ``dynamic subspace'' that carries all serial dependence, and that the noise is white but may have arbitrary cross-sectional covariance and lives in a linear subspace that may (partly) overlap with the dynamic space. Within this general framework, we address three questions: how to define an optimal linear denoiser that preserves the dynamic subspace; how the geometry between the dynamic and noise subspaces determines what can be achieved in principle; and how such an optimal denoiser can be estimated from data and used in practice.

\paragraph{Our contributions}
Our contributions are threefold.
First, at the population level we formalize noise reduction as the problem of finding the linear projection onto the dynamic space that minimizes the mean-squared reconstruction error of the latent dynamic component. Within the class of linear projections, we characterize a population-optimal denoiser that combines projection onto the dynamic space with a correction that exploits the cross-sectional structure of the noise. A geometric analysis links achievable denoising to the relative geometry of the dynamic and noise spaces and identifies when perfect linear denoising is possible.

Second, we turn the population construction into an estimation method for a single observed high-dimensional time series. Building on lagged autocovariances \citep{Lam2012} and bootstrap-based dimension selection \citep{Bathia2010}, we estimate the dynamic space and then the structured noise from residual covariances, using an explained-variance rule for the noise dimension. The method uses only second-order properties and, seen as a pre-processing step, can be paired with many downstream models on the denoised series.

Third, we show that, under standard conditions, the denoised series obtained from the estimated optimal transformation converges to its population target at the usual square-root rate. Simulations with vector autoregressions and mixed noise show that optimal denoising markedly improves parameter estimates, reconstruction error, and one-step-ahead forecasting performance relative to projection-based denoising and raw data. 

Overall, the proposed method can be viewed as a bridge between the dynamic factor and denoising literatures: it uses lagged autocovariance structure to learn the low-dimensional dynamic subspace and the cross-sectional noise structure to design a linear transformation that reduces noise as effectively as possible within that space.

\paragraph{Related literature}

Our work is related to several strands of literature on high-dimensional and functional time series, dynamic factor models, and high-dimensional denoising.

The first strand concerns factor modeling for high-dimensional time series and panels. For large cross-sectional datasets, \citet{Bai2002,Bai2007} developed influential information criteria for determining the number of factors in approximate factor models. In macroeconomic forecasting, \citet{Stock2002,Stock2002a} showed that diffusion indexes or principal components constructed from large panels of macro indicators can substantially improve forecasts. Spectral approaches to dynamic factor modeling have been developed by \citet{Forni2000,Forni2005}, who introduced and analyzed generalized dynamic factor models based on the spectral density matrix. \citet{Hallin2007} proposed criteria to determine the number of common shocks in the general dynamic factor model. For multivariate time series, \citet{Pan2008} and \citet{Lam2011,Lam2012} studied high-dimensional factor models in which a small number of dynamic factors drive the serial dependence, and proposed methods to estimate the factor space from lagged autocovariances in the presence of additive noise. More recent work such as \citet{Xia2018} examines the eigenvalue behavior of high-dimensional covariance structures to infer the number of factors, providing complementary tools for dimension selection.

A second strand is the functional time series literature, where each observation is a curve or function. \citet{Bathia2010} introduced a dynamic dimension-reduction framework for curve time series and proposed bootstrap procedures to determine the finite-dimensional dynamic subspace, exploiting the fact that white noise has vanishing lagged covariance. This connects closely to functional data approaches to forecasting, such as \citet{Hyndman2007}, which treat age-specific mortality and fertility rates as functional data observed with noise. There is also work on regularized and sparse principal component methods in high dimensions, for example \citet{Johnstone2009}, providing insight into the behavior of principal components when the dimension is large relative to the sample size. Our setting can be viewed as a finite-dimensional analog of these frameworks, with an emphasis on exploiting the noise structure for improved denoising.

A third related area is high-dimensional denoising and covariance regularization. Wave\-let-based denoising methods, pioneered by \citet{Donoho1994,Donoho1995}, have demonstrated how combining orthogonal transforms with thresholding can yield near-optimal risk properties in nonparametric regression and signal processing. In the context of high-dimensional covariance estimation, \citet{Ledoit2004} proposed a well-conditioned shrinkage estimator that stabilizes eigenvalues, while \citet{Fan2011,Fan2013} developed factor-based and thresholding methods for estimating large covariance matrices in approximate factor models. Related work on high-dimensional regression and regularization, such as the Dantzig selector of \citet{Candes2007}, highlights the importance of exploiting low-dimensional structure in high-dimensional problems more generally. Our approach is complementary to these methods; instead of focusing on static covariance or regression problems, we target noise reduction in a dynamic setting where both temporal dependence and cross-sectional factor structure are present, and we design a denoiser that is explicitly optimal within a natural class of linear transformations.


\section{Methodology} \label{sec:methodology}
We consider an observable vector-valued time series $\{\byt\}$, with $\byt \in\mathbb{R}^n,$ and $t \in \mathbb{Z}$,  composed of a latent signal component $\bxt \in\mathbb{R}^n$ and a latent white-noise component $\bepst \in\mathbb{R}^n,$
\begin{equation*} \label{eq:time_series_definition}
\byt = \bxt + \bepst 
\end{equation*}
The signal component contains, by definition, all serial dependence of the observed time series. We assume it can be represented as a $d$-dimensional factor model, $\bxt = U \bxit,$ where $U$ is an $(n \times d)$ factor-loading matrix and $\bxit \in \mathbb{R}^d$ are the factors. We assume both $\{\bxit\}$ and $\{\bepst\}$ are strictly stationary time series processes with finite second moments, so they are also weakly stationary. Denoting the covariance matrix of any weakly stationary vector-valued time series $\{\bv_t\}$ by $\Sigma_\bv = \Var{\bv_t},$ we assume the factor time series $\{\bxit\}$ has a full-rank covariance matrix $\Sigmaxi,$ ruling out the possibility of redundancy in the factor model. Other details of the underlying process defining the signal are left unspecified. The noise component $\bepst$ is white noise with mean zero and covariance matrix  $\Sigmaeps.$

The dimension of the factor model is typically small and much smaller than the dimension of the observed time series, i.e., $d \ll n.$ The method proposed here does not necessarily need this assumption, but in typical applications only then the dimension reduction is useful.

Without loss of generality, we assume that the columns of $U$ are orthonormal and the covariance matrix of the factors, $\Sigmaxi,$ is assumed to be diagonal. Starting from a factor model $\bxt = U \bxit$ that does not have these properties, we can use the QR-decomposition $U=QR$ and redefine $U$ as $Q$ and $\bxit$ as $R \bxit.$ This makes the columns of $U$ orthonormal. There is still freedom in defining $U,$ since for any orthogonal $(d\times d)$-matrix $A$ we have $U \bxit = U A ^\top \bxit$. By redefining $U$ as $UA$ and $\bxit$ as $A^\top \bxit,$ one can choose $A$ such that $\Sigmaxi$ is diagonal. 

We refer to the $d$-dimensional subspace in which $\bxt$ lives, spanned by the columns of $U,$ as the dynamic space $\mathcal{M}.$ The serial dependence of the observed time series exists entirely within this space. Because $\Sigmaxi$ is full-rank, the covariance matrix of the signal, $\Sigmax = U \Sigmaxi U^\top,$ has exactly $d$ non-zero eigenvalues. The corresponding eigenvectors form a basis of the dynamic space $\mathcal{M}.$ We use this fact to estimate the dimension and obtain a basis of $\mathcal{M}$, but $\Sigmax$ cannot be estimated directly because $\bxt$ is unobservable. Moreover, the standard (naive) estimator of the covariance matrix of the observed time series $\{\byt\}$ is biased due to the noise component, since $\Sigmay = \Sigmax + \Sigmaeps.$ This bias does not vanish as the sample size of the observed time series tends to infinity.

\cite{Lam2011} present a method, developed earlier for functional time series \citep{Bathia2010}, that overcomes this problem by exploiting the lagged autocovariance, which is defined for any weakly stationary vector-valued time series $\{\bv_t\}$ as $\Sigma_\bv(k) = \text{Cov} \left[ \bv_{t+k}, \bv_t \right].$ Since the lagged autocovariance of the noise component is zero, $\Sigmaepslagged{k} = 0$ for $k \neq 0,$ the lagged autocovariances of the observed time series and the signal component are equal, $\Sigmaylagged{k} = \Sigmaxlagged{k}$ for $k \neq 0.$ Hence, the observed time series can be used to estimate the autocovariance of $\bxt$ at non-zero lags.

A vector in the orthogonal complement of the dynamic space $\mathcal{M}$ is necessarily an eigenvector of $\Sigmaylagged{k} = U \Sigmaxilagged{k} U^\top$ with eigenvalue zero. Assuming that $\Sigmaxilagged{k}$ is full rank, the kernel of $\Sigmaylagged{k}$ is of dimension $n-d$ and therefore its eigenvectors with eigenvalue zero span the orthogonal complement of $\mathcal{M}.$ As a consequence of the same full-rank assumption, for any $\bu \in \mathbb{R}^n$ it holds that $\Sigmaylagged{k} \Sigmaylagged{k}^\top \bu = 0$ if and only if $\Sigmaylagged{k}^\top \bu = 0.$ Hence, $\mathcal{M}^\perp$ is spanned by the eigenvectors of $\Sigmaylagged{k} \Sigmaylagged{k}^\top$ with eigenvalue zero. Note that $\Sigmaylagged{k} \Sigmaylagged{k}^\top$ is positive semi-definite and, in particular, symmetric. Therefore, its eigenvectors with non-zero eigenvalue span $\mathcal{M}.$

More generally, introducing the matrix
\begin{equation} \label{eq:K}
K = \sum_{k=1}^{k_0} c_k \Sigmaylagged{k} \Sigmaylagged{k}^\top,
\end{equation}
where $c_k \in \mathbb{R}$ are arbitrary coefficients, the eigenvectors of $K$ with non-zero eigenvalue span the dynamic space $\mathcal{M},$ provided that there is at least one $k$ with $c_k \neq 0$ for which the corresponding matrix $\Sigmaxilagged{k}$ is full rank. Crucially, using the naive estimator for $\Sigmaylagged{k},$ one can define a consistent estimator for $K$ without asymptotic bias arising from the noise component. This means that a basis for $\mathcal{M}$ can be estimated without asymptotic bias, using an eigenanalysis of the sample version of the matrix $K.$ Because of the sum in the definition of $K,$ information from different lags is combined. For the default option $c_k = 1$ for $k=1,\ldots,k_0,$ it has been reported that the estimation of the dynamic space is rather insensitive to the choice of $k_0$ (provided $k_0 \geq 2$), both in the case of high-dimensional \citep{Lam2011} and functional \citep{Bathia2010} time series data.

For future reference, we denote by $V$ the $(n \times d)$ matrix whose columns are $d$ orthonormal eigenvectors of $K$ associated with its non-zero eigenvalues. We emphasize that $U$ and $V$ are alike in the sense that their columns form orthonormal bases of the same dynamic space $\mathcal{M}$, but $U$ and $V$ are typically not equal. The matrix $U$ is an inherent property of the time series process under consideration, whereas $V$ also depends on the (partly arbitrary) choice of the coefficients $c_k$ in the definition of $K$ in Equation~\eqref{eq:K}.

\subsection{Noise reduction in time series data} \label{sec:denoising}
Given an observed time series $\{\byt\}_{t=1}^T,$ the goal of denoising is to reconstruct the signal component $\bxt.$ Formally, denoising is a linear operation $\Pden : \mathbb{R}^n \to \mathbb{R}^n$ such that $\bxtden = \Pden \byt$ is the reconstruction of $\bxt.$ In the context of an underlying factor model as described above, the true signal component $\bxt$ lies within the subspace $\mathcal{M} \subset \mathbb{R}^n.$ It therefore makes sense to restrict ourselves to denoising operations that are projections onto the dynamic space $\mathcal{M}.$ This means that $\Pden^2 = \Pden$ and if $\bu \in \mathcal{M},$ then $\Pden \bu = \bu.$ In addition, we limit our analysis to linear operations and will refer to $\Pden$ as a denoising matrix.

An example of such a denoising operation is an orthogonal projection onto $\mathcal{M},$ i.e., $\Portho = U U^\top.$ Although not framed as noise reduction, this is in essence what is proposed in \cite{Lam2011} (and in \cite{Bathia2010} for functional time series data), where an estimate of the dynamic space $\mathcal{M}$  can be used to reconstruct the latent component of the time series described by a factor model. This approach 
does not take into account the possibility of noise taking values in a subspace that is not orthogonal to $\mathcal{M}.$ If there exists such structure in the noise, orthogonal denoising removes noise in a suboptimal way. In what follows, we develop an optimal form of noise reduction that explicitly exploits this structure.

\subsection{MSE-optimal denoising} \label{sec:optimal_denoising}
To address the question of optimal denoising, we first specify the concept of optimality. We say that a denoising matrix $\Pden$ is optimal if it minimizes the mean squared (denoising) error $\Ex{\norm{\bxt - \bxtden}^2},$ where $\norm{\cdot}$ denotes the Euclidean norm. We will use the terms ``MSE-optimal denoising'' and ``optimal denoising'' interchangeably. 

Before presenting the optimal denoising strategy, some notation regarding the noise component $\bepst$ must be introduced, since our strategy leverages structure in the noise. Given a dynamic space $\mathcal{M},$ the noise can be decomposed as $\bepst = \bepstpara + \bepstperp,$ where $\bepstpara = U U^\top \bepst$ and $ \bepstperp=(I_n - U U^\top)\bepst$. The noise is thus decomposed into a geometrically parallel and an orthogonal component with respect to $\mathcal{M}.$
Note that orthogonality in the outcome space $\mathbb{R}^n$ does not imply that $\bepstperp$ and $\bepstpara$ are uncorrelated; in fact, our denoising strategy is based on exploiting their covariance.

We then write the orthogonal noise component in terms of a lower-dimensional representation, $\bepstperp = W_\perp \bchit,$ such that $\Sigmachi$ is full rank. Let $d_\perp$ be the dimension of this representation, i.e., $W_\perp$ is an $(n\times d_\perp)$-matrix and $\bchit \in \mathbb{R}^{d_\perp}.$ With the same reasoning as previously used for $U$ and $\Sigmaxi,$ we can assume without loss of generality that the columns of $W_\perp$ are orthonormal and $\Sigmachi$ is diagonal. Note that this representation can be viewed as a factor model for the noise component, although it should be stressed that there is no serial dependence within the noise. Furthermore, the dimension of the orthogonal noise is bounded from above, $d_\perp \leq n - d.$ In the case of equality, the (linear) space of the orthogonal noise component corresponds to the orthogonal complement of $\mathcal{M}.$ We also emphasize that $d_\perp$ is not assumed to be small relative to $n$ or any other scale. The sole purpose of this lower-dimensional representation is its non-redundancy, making $\Sigmachi$ full rank.

We are now in a position to present the optimal denoising strategy, which is the main result of this paper.

\begin{theorem} \label{thm:optimal_denoising}
Given a dynamic space $\mathcal{M},$ the MSE-optimal denoising matrix, i.e.\ the minimizer of $\Ex{\norm{\bxt - \bxtden}^2}$, among all projections onto $\mathcal{M}$ is given by
\begin{equation} \label{eq:P_optimal}
\Popt = U U^\top \left( I_n - \Sigmay W_\perp ( \Sigmachi )^{-1} W_\perp^\top \right),
\end{equation}
which is the solution of a convex optimization problem.
\end{theorem}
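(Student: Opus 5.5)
The plan is to parametrize the admissible denoising matrices explicitly, reduce the mean squared error to a quadratic function of a free parameter, and solve that quadratic problem as a least-squares regression.

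\emph{Step 1: parametrize the projections.} A matrix $\Pden$ is a projection onto $\mathcal{M}$ exactly when its range is $\mathcal{M}$ and $\Pden \bu=\bu$ for all $\bu\in\mathcal{M}$. Hence $\Pden = U L$ for some $(d\times n)$-matrix $L$ with $LU=I_d$. Writing $L=U^\top + A$, the constraint becomes $AU=0$, i.e.\ $A=G(I_n-UU^\top)$ for an arbitrary $(d\times n)$-matrix $G$. Conversely, every such $G$ yields a projection onto $\mathcal{M}$. The feasible set is therefore an affine space, which is what will make the problem convex.

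\emph{Step 2: reduce the objective.} Since $\bxt\in\mathcal{M}$, we have $\Pden\bxt=\bxt$, and so $\bxt-\bxtden=-\Pden\bepst$. Because $U$ has orthonormal columns and $\bepst$ has mean zero,
\[
\Ex{\norm{\bxt-\bxtden}^2}=\Ex{\norm{L\bepst}^2}=\Tr{\Var{U^\top\bepst + G\bepstperp}}.
\]
Now insert $\bepstperp=W_\perp\bchit$. I first check that $W_\perp^\top U=0$: the vector $\bepstperp$ lies in $\mathcal{M}^\perp$ and $\Sigmachi$ is full rank, so the columns of $W_\perp$ lie in $\mathcal{M}^\perp$. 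It follows that $\bchit=W_\perp^\top\bepst$. The objective then depends on $G$ only through $H:=GW_\perp$, a $(d\times d_\perp)$-matrix that ranges over all such matrices as $G$ varies:
\[
f(H)=\Tr{\Var{U^\top\bepst+H\bchit}}.
\]
This $f$ is a convex quadratic in $H$ (it is strictly convex, since $\Sigmachi$ is positive definite), and composing it with the linear map $G\mapsto GW_\perp$ keeps the problem convex in the original parametrization.

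\emph{Step 3: solve the quadratic problem.} Minimizing $f$ is a multivariate least-squares regression of $U^\top\bepst$ on $\bchit$. Setting the gradient to zero gives
\[
H^\ast=-\,\mathrm{Cov}[U^\top\bepst,\bchit]\,\Sigmachi^{-1}=-\,U^\top\Sigmaeps W_\perp\Sigmachi^{-1}.
\]
Taking $G=H^\ast W_\perp^\top$ reproduces this $H^\ast$ (using $W_\perp^\top W_\perp=I$), and $W_\perp^\top(I_n-UU^\top)=W_\perp^\top$. Therefore
\[
\Popt=UU^\top\bigl(I_n-\Sigmaeps W_\perp\Sigmachi^{-1}W_\perp^\top\bigr).
\]
Finally, replace $\Sigmaeps$ by $\Sigmay=\Sigmax+\Sigmaeps$. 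This is legitimate because $U^\top\Sigmax W_\perp=\Sigmaxi U^\top W_\perp=0$, which yields \eqref{eq:P_optimal}.

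\emph{Main obstacle.} The delicate part is the bookkeeping in Step 2. One must justify that $W_\perp\perp U$ and $\bchit=W_\perp^\top\bepst$, and that the components of $G$ acting on $\mathcal{M}^\perp$ outside the span of $W_\perp$ do not affect the error. Consequently the minimizer is unique only up to those directions when $d_\perp<n-d$, and the theorem's matrix should be read as the canonical minimizer, namely the one with $G$ supported on the noise space. Everything else is a routine quadratic minimization.
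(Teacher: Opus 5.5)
Your proposal is correct and follows essentially the same route as the paper, which also reduces to $\Pden = UU^\top + UQW_\perp^\top$ (your $H$ is its $Q$). It then expands the MSE as a convex quadratic in $Q$ and solves the first-order condition $U^\top \Sigmay W_\perp + Q\Sigmachi = 0$. Your version is more careful: it characterizes all projections onto $\mathcal{M}$ explicitly, verifies $U^\top W_\perp = 0$, goes through $\Sigma_\bvarepsilon$ before switching to $\Sigmay$, and notes that the minimizing matrix is non-unique on $\mathcal{M}^\perp \cap \mathrm{span}(W_\perp)^\perp$ when $d_\perp < n-d$, all of which the paper absorbs into its ``without loss of generality'' step.
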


\begin{proof}
The matrix $\Pden$ is meant to act on the time series observations $\byt = \bxt + \bepstpara + \bepstperp,$ which can be written as a linear combination of the columns of $U$ and $W_\perp.$ Since we have restricted our analysis to denoising matrices that are projections onto the dynamic space $\mathcal{M},$ we can, without loss of generality, write $\Pden = U U^\top + U Q W_\perp^\top$. The search for an optimal $\Pden$ is therefore reduced to the search for an optimal $(d \times d_\perp)$-matrix $Q.$ Since $\bxtden = \Pden \byt = \bxt + \bepstpara + U Q W_\perp^\top \bepstperp,$ the mean squared (denoising) error is
\begin{align*}
\Ex{\norm{\bxt - \bxtden}^2}
& = \Tr{ \Var{ \bepstpara + U Q W_\perp^\top \bepstperp }} \\
& = \Tr{\Sigmaepspara} + 2 \Tr{Q W_\perp^\top \Sigmay U} + \Tr{Q^\top Q \Sigmachi} ,
\end{align*}
where we used that $\bepst$ is assumed to have zero mean and that $\text{Cov} \left[ \bepstpara, \bepstperp \right] = U U^\top \Sigmay (I_n - U U^\top).$

This is a convex optimization problem for $Q$, because the Hessian is block diagonal with blocks $\Sigmachi$ that are all positive definite (recall that $\Sigmachi$ is full rank). The first-order condition is $U^\top \Sigmay W_\perp + Q \Sigmachi = 0,$ whose solution is $Q_\mathrm{opt} = - U^\top \Sigmay W_\perp \Sigmachi^{-1}.$ Substituting this back into $\Pden$ gives the desired $P_\mathrm{opt}.$
\end{proof}

The intuition behind this optimal denoising strategy is as follows. Given a dynamic space $\mathcal{M},$ it is possible to use the observed time series $\{\byt\}$ to recover the noise component $\bepstperp$ that is orthogonal to $\mathcal{M}.$ Subtracting this from $\byt$ is what we refer to as orthogonal denoising. This is generally suboptimal, as the parallel noise component $\bepstpara$ is left untouched. By exploiting the structure within the noise space, information about the orthogonal noise component can be used to remove (part of) the parallel noise component, thereby minimizing the mean squared denoising error. To be more concrete, one can view the procedure as performing a multiple linear regression with $\bepstperp$ as the regressor and using its covariance with $\bepstpara$ to also remove (part of) $\bepstpara$ from the observed time series $\{\byt\}$.

To understand how much of the noise can be removed, we write the total noise in the factor-model form $\bepst = W \bxiepst,$ where $W$ is an $(n\times d_\varepsilon)$-matrix and $\Sigmaxieps$ is full rank. Again, without loss of generality, the columns of $W$ are assumed to be orthonormal. They span what we will call the noise space $\mathcal{M}_\varepsilon$. Note that this representation always exists, since no assumptions about the (relative) size of the noise dimension $d_\varepsilon$ are made. 

\begin{theorem} \label{thm:remaining_noise}
Given a dynamic space $\mathcal{M}$ and a noise space $\mathcal{M}_\varepsilon,$ the denoising error of optimal denoising $\bxtopt = \Popt \byt,$ as defined in Equation \eqref{eq:P_optimal}, is given by 
\begin{equation} \label{eq:denoising_error}
    \bxtopt - \bxt = \left\{
        \begin{matrix}
            {\bf 0} & \qquad \mathrm{if} \quad \mathcal{M} \cap \mathcal{M}_\varepsilon = \emptyset, \\
            & \nonumber \\
            \bepstc - \mathrm{Cov} \left[ \bepstc, \bepstperp \right] W_\perp ( \Sigmachi )^{-1} W_\perp^\top \bepstperp  & \qquad \mathrm{if} \quad \mathcal{M} \cap \mathcal{M}_\varepsilon \neq \emptyset,
        \end{matrix}
    \right.
\end{equation}
where $\bepstc = W_\mathrm{c} W_\mathrm{c}^\top \bepst$ and the columns of $W_\mathrm{c}$ form an orthonormal basis of the common subspace $\mathcal{M} \cap \mathcal{M}_\varepsilon$ (the {\normalfont c} stands for ``common'').
\end{theorem}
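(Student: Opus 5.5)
We begin by computing the denoising error directly from Theorem~\ref{thm:optimal_denoising}. Write $\Pi = W_\perp \Sigmachi^{-1} W_\perp^\top$. Since $\bxt \in \mathcal{M}$ and $W_\perp^\top \bxt = 0$, applying $\Popt$ to $\byt = \bxt + \bepstpara + \bepstperp$ gives
\begin{equation*}
\bxtopt - \bxt = \bepstpara - U U^\top \Sigmay \Pi \bepstperp .
\end{equation*}
Because $W_\perp^\top$ annihilates $\mathcal{M}$, we have $\Sigmay W_\perp = \Sigmaeps W_\perp$. Moreover, $UU^\top \Sigmaeps W_\perp = \mathrm{Cov}[\bepstpara, \bepstperp] W_\perp$, since $W_\perp W_\perp^\top$ is the orthogonal projection onto the range of $\bepstperp$. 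The error is therefore
\begin{equation*}
\bepstpara - \mathrm{Cov}[\bepstpara,\bepstperp]\,\Pi\,\bepstperp,
\end{equation*}
which is exactly the residual of the best linear prediction of $\bepstpara$ from $\bepstperp$. The task is now to show that this residual equals the one obtained by replacing $\bepstpara$ with $\bepstc$, and that it vanishes when the common subspace is trivial. Here ``$=\emptyset$'' should be read as $\mathcal{M}\cap\mathcal{M}_\varepsilon=\{\bf 0\}$, in which case $W_\mathrm{c}$ is empty, $\bepstc = 0$, and the second formula collapses to the first. So a single identity proves both cases.

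The key geometric step is to decompose $\bepstpara$. Let $P_{\mathcal{M}} = UU^\top$. Every realization of $\bepst$ lies in $\mathcal{M}_\varepsilon$, so $\bepstpara$ lies in $P_{\mathcal{M}}\mathcal{M}_\varepsilon$ and $\bepstperp$ lies in $(I-P_{\mathcal{M}})\mathcal{M}_\varepsilon$. I would split $\mathcal{M}_\varepsilon = \mathcal{C} \oplus \mathcal{R}$, where $\mathcal{C} = \mathcal{M}\cap\mathcal{M}_\varepsilon$ and $\mathcal{R}$ is the orthogonal complement of $\mathcal{C}$ inside $\mathcal{M}_\varepsilon$. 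The restriction of $I-P_{\mathcal{M}}$ to $\mathcal{R}$ is injective, because its kernel within $\mathcal{M}_\varepsilon$ is exactly $\mathcal{C}$. Write $\bepst = \bepstc + \mathbf{r}_t$ with $\mathbf{r}_t \in \mathcal{R}$. Then $\bepstperp = (I-P_{\mathcal{M}})\mathbf{r}_t$, and by injectivity there is a deterministic linear map $L$ with $\mathbf{r}_t = L\bepstperp$. It follows that
\begin{equation*}
\bepstpara = \bepstc + P_{\mathcal{M}} L\, \bepstperp ,
\end{equation*}
where we use that $\bepstc \in \mathcal{M}$, so $P_{\mathcal{M}}\bepstc = \bepstc$. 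Thus the parallel noise equals the common component plus an exact linear function of the orthogonal noise.

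To finish, write $A = P_{\mathcal{M}}L$ and substitute the decomposition into the residual. The part $A\bepstperp$ is reproduced exactly by the regression:
\begin{equation*}
\mathrm{Cov}[A\bepstperp,\bepstperp]\,\Pi\,\bepstperp = A\,\Sigma_{\bepsperp}\,\Pi\,\bepstperp = A\bepstperp .
\end{equation*}
This holds because $\Sigma_{\bepsperp} = W_\perp\Sigmachi W_\perp^\top$, so $\Sigma_{\bepsperp}\Pi = W_\perp W_\perp^\top$, which acts as the identity on the range of $\bepstperp$. Since the residual map is linear in the target, what remains is $\bepstc - \mathrm{Cov}[\bepstc,\bepstperp]\Pi\bepstperp$, as claimed.

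The main obstacle is the injectivity and lifting step: one must make rigorous that $\mathbf{r}_t$ is recovered as a fixed linear function of $\bepstperp$ almost surely. This requires confirming that the support of $\bepst$ lies in $\mathcal{M}_\varepsilon$ (true, since $\Sigmaxieps$ is full rank and $\bepst = W\bxiepst$), and that $\mathcal{C}$ is precisely the kernel of $I-P_{\mathcal{M}}$ restricted to $\mathcal{M}_\varepsilon$.
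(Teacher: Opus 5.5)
Your proof is correct and follows essentially the same route as the paper. Both arguments write $\bepst = \bepstc + (\text{deterministic linear map})\,\bepstperp$ (the paper's $(M_{\parallel\perp}+I_n)\bepstperp$, your $L\bepstperp$), then use the fact that the regression term $\mathrm{Cov}[\,\cdot\,,\bepstperp]\,W_\perp(\Sigmachi)^{-1}W_\perp^\top\bepstperp$ reproduces any exact linear function of $\bepstperp$, so only the common component survives. Your injectivity argument for $I_n-UU^\top$ on the orthogonal complement of $\mathcal{M}\cap\mathcal{M}_\varepsilon$ inside $\mathcal{M}_\varepsilon$ supplies the justification for the existence of $M_{\parallel\perp}$, which the paper only asserts, and reading ``$=\emptyset$'' as the trivial subspace lets a single identity cover both cases.
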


\begin{proof}
First consider the case $\mathcal{M} \cap \mathcal{M}_\varepsilon = \emptyset.$ The noise space $\mathcal{M}_\varepsilon$ is spanned by the columns of $W,$ where $\bepst = W \bxiepst.$ If the dynamic and noise space have no common subspace, every linear combination of the columns of $W$ has a component orthogonal to $\mathcal{M}.$ Noise $\bepst$ therefore always has an orthogonal component $\bepstperp$ and a (potential) parallel component $\bepstpara$ can always be expressed as a linear function of the perpendicular component. More concretely, if $\mathcal{M} \cap \mathcal{M}_\varepsilon = \emptyset,$ there exists a matrix $M_{\parallel\perp} \in \mathbb{R}^{n\times n}$ such that $\bepst = (M_{\parallel\perp} + I_n)\bepstperp,$ where $I_n$ is the identity matrix on $\mathbb{R}^n.$ This implies
\[
\mathrm{Cov} \! \left[ \bepstpara, \bepstperp \right] = M_{\parallel\perp} \Sigmaepsperp = M_{\parallel\perp} W_\perp \Sigmachi W_\perp^\top
\]
and therefore
\[
U U^\top \Sigmay W_\perp ( \Sigmachi )^{-1} W_\perp^\top \bepstperp = U U^\top \mathrm{Cov} \left[ \bepstpara, \bepstperp \right] W_\perp ( \Sigmachi )^{-1} W_\perp^\top \bepstperp =  M_{\parallel\perp} \bepstperp = \bepstpara.
\]
The optimal denoising matrix $\Popt,$ defined in Equation \eqref{eq:P_optimal}, then projects the noise onto the null vector, i.e., $\Popt \bepst = {\bf 0},$ leading to perfect denoising.

In the case $\mathcal{M} \cap \mathcal{M}_\varepsilon \neq \emptyset$ parallel noise $\bepstpara$ can contain a component in $\mathcal{M} \cap \mathcal{M}_\varepsilon,$ which is given by $\bepstc = W_\mathrm{c} W_\mathrm{c}^\top \bepst,$ where the columns of $W_\mathrm{c}$ form an orthonormal basis of the common subspace $\mathcal{M} \cap \mathcal{M}_\varepsilon$ (the c stands for `common'). The space spanned by the remaining noise parallel to $\mathcal{M},$ $\bepstpara - \bepstc,$ has no common component with the dynamic space $\mathcal{M}$ and thus for this part of the noise the previous situation applies. This means that there exists a matrix $M_{\parallel\perp} \in \mathbb{R}^{n\times n}$ such that $\bepst = \bepstc + (M_{\parallel\perp} + I_n)\bepstperp.$ The optimally denoised signal then simplifies to
\begin{align*}
    \bxtopt 
    & = \Popt \byt \\
    & = \Popt (\bxt + \bepstc + (M_{\parallel\perp} + I_n)\bepstperp) \\
    & = \bxt + \bepstc + (\bepstpara - \bepstc) - U U^\top \Sigmay W_\perp ( \Sigmachi )^{-1} W_\perp^\top \bepstperp \\
    & = \bxt + \bepstc - \mathrm{Cov} \left[ \bepstc, \bepstperp \right] W_\perp ( \Sigmachi )^{-1} W_\perp^\top \bepstperp,
\end{align*}
where in the last equality we used that
\[
U^\top \Sigmay W_\perp = U^\top \mathrm{Cov} \left[ \bepstc, \bepstperp \right] W_\perp + U^\top \mathrm{Cov} \left[ \bepstpara - \bepstc, \bepstperp \right] W_\perp
\]
and $\mathrm{Cov} \left[ \bepstpara - \bepstc, \bepstperp \right] = M_{\parallel\perp} \Sigmaepsperp = M_{\parallel\perp} W_\perp \Sigmachi W_\perp^\top.$ As in the previous case, the component $\bepstpara - \bepstc$ is perfectly denoised. 
\end{proof}

In the previous discussion of optimal denoising and in Theorems \ref{thm:optimal_denoising} and \ref{thm:remaining_noise}, two types of structure within the noise space have been used: linear correlations between $\bepstperp$ and $\bepstpara,$ and a factor-model form of the noise and the orthogonal component of the noise,  $\bepst = W \bxiepst$ and $\bepstperp = W_\perp \bchit$, respectively. The first type of structure is essential for optimal denoising. In the absence of such a structure, i.e., when the covariance between $\bepstperp$ and $\bepstpara$ is zero, Theorem \ref{thm:optimal_denoising} is still valid, but optimal denoising reduces to orthogonal denoising.

By contrast, it is important to note that the factor-model form of the noise is not a restrictive assumption, because we do not assume the dimensions $d_\varepsilon$ and $d_\perp$ to be small. Typical noise $\bepst \thicksim \mathrm{IID} \! \left( {\bm 0}, \Sigmaeps \right)$ fits into the framework $\bepst = W \bxiepst,$ where $\Sigmaeps = W \Sigmaxieps W^\top$ is the eigendecomposition with the zero eigenvalues removed. The same representation applies to $\bepstperp = W_\perp \bchit.$ Assuming that $d_\varepsilon$ and $d_\perp$ are small would be unrealistic, because, unless the noise is subject to context-specific restrictions, there is no reason for the noise to be confined to a lower-dimensional subspace. This suggests that, typically, $d_\varepsilon = n$ and $d_\perp = n - d,$ in which cases $\Sigmaeps$ and $\Sigmaepsperp,$ respectively, are invertible. This does not imply, though, that the invertibility of $\Sigmaeps$ and $\Sigmaepsperp$ is guaranteed. The primary purpose of using the factor-model form for the noise is to ensure that the results in Theorems \ref{thm:optimal_denoising} and \ref{thm:remaining_noise} remain valid even when the noise covariance matrices are not invertible.

It should be noted though that at the sample level optimal denoising is more effective when (most of) the noise can be explained by a small number of factors. This implies that $\Sigmaeps$ and $\Sigmaepsperp$ have a few relatively large eigenvalues, while the remaining eigenvalues are (close to) zero. This in turn makes the (effective) dimensions $d_\varepsilon$ and $d_\perp$ small relative to $n$ and $n-d,$ respectively. In the simulations in Section \ref{sec:simulation} we investigate how the presence of noise that lacks the two types of structure discussed here affects optimal denoising in finite samples.

\subsection{Estimation} \label{sec:estimation}
This section translates the theoretical optimal denoising result of Equation \eqref{eq:P_optimal} into an actionable approach for the practical case of an observed time series sample $\left\{ \byt \right\}_{t=1}^T$ of length $T.$ The optimal denoising matrix contains several quantities that need to be estimated. Regarding the dynamic space $\mathcal{M},$ both the dimension $d$ and a set of basis vectors must be estimated. For this we largely follow the approach of \cite{Lam2011}, as outlined at the start of Section \ref{sec:methodology}. Recall that we made the distinction between the $(n \times d)$-matrices $U$ and $V.$ Both have columns that form an orthonormal basis of $\mathcal{M}.$ In particular, $U U^\top = V V^\top$ and therefore $U$ can be replaced by $V$ in the optimal denoising matrix of Equation \eqref{eq:P_optimal}. A crucial difference is that the columns of $V$ are (by definition) eigenvectors of $K,$ defined in Equation \eqref{eq:K}. This makes it possible to define an estimator of $V,$ in contrast to $U.$ The sample version of the matrix $K$ is accessible by means of the naive estimator for the lagged autocovariance matrix,
\[
\Sigmaylaggedhat{k} = \frac{1}{T - k} \sum_{t=1}^{T-k} (\by_{t+k} - \bybar) (\byt - \bybar)^\top,
\]
where $\bybar$ is the sample mean of the observed time series.

Recall that $K$ has exactly $d$ non-zero (positive) eigenvalues, with corresponding eigenvectors that form a basis of $\mathcal{M}.$ This will typically not be the case for the estimate $\widehat{K},$ due to finite-sample estimation error coming from the noise components $\bepst.$ This complicates the estimation of $d.$ The literature contains many proposals to address this issue. Early approaches try to minimize forecast errors \citep{Hyndman2007} or use information criteria \citep{Bai2002, Bai2007, Hallin2007, Otto2024}. One approach iteratively adds directions to the orthogonal complement of $\mathcal{M},$ until the next addition is no longer white noise \citep{Pan2008}. This method, based on the Ljung-Box-Pierce portmanteau test for white noise, has been reported to be problematic in the current context \citep{Bathia2010}. A widely-used alternative is the ratio-based approach \citep{Lam2011, Lam2012}, which estimates $d$ based on the largest relative drop in eigenvalues of $\widehat{K}.$ Our simulations have shown that, for relatively low signal-to-noise ratios and small time series length $T,$ the largest relative drop can occur at $d + d_\bvarepsilon,$ where $d_\bvarepsilon$ is the dimension of the noise space $\mathcal{M}_\varepsilon.$ This can inadvertently identify the noise components as signal and thereby (largely) overestimates the dimension of $\mathcal{M}.$ A more recent approach by \citet{Wu2018} uses the fact that the asymptotic behavior of eigenvalue estimates differs between non-zero and zero eigenvalues, see also \citet{Otto2024, Otto2025}. This method requires the tuning of a hyperparameter to separate the two types of asymptotic behavior. In a supervised setting this can be accomplished with cross-validation, for example, but we do not observe the would-be target variable $\bxt.$ One option is to tune this hyperparameter indirectly, for example by using the denoised signal for forecasting and taking the resulting forecast error as the tuning criterion.

In this paper, we estimate $d$ using a series of bootstrap tests. This approach was first proposed by \cite{Bathia2010} in the context of functional time series. As it uses neither the smoothness of the curves, nor the infinite dimensionality of the Hilbert space of functional data, this method can also be applied to vector-valued time series. It is worth noting that a satisfying denoising performance often does not require an exactly correct estimate of $d.$ The signal component $\bxt$ can be viewed as a linear combination of the different factors in the factor model. The expected importance of each of the factors, which can be quantified by the corresponding eigenvalue of $\Sigmaxi,$ typically varies among the different factors. This means that a subset of the $d$ factors can account for most, or almost all, of the signal component. Our simulations in Section \ref{sec:simulation} show that $d$ is often underestimated, without much harm to the denoising performance. For details about the bootstrap approach  to estimating $d$ and its effect on the noise reduction, see Appendix \ref{app:bootstrap}. Given an estimated dimension $\hat{d}$ of the dynamic space, we define $\Vhat$ with on its columns the orthonormal eigenvectors of $\Khat$ associated with the $\hat{d}$ largest eigenvalues.

Since $\Sigmay$ can be estimated by $\Sigmaylaggedhat{0},$ what remains to be estimated in Equation \eqref{eq:P_optimal} is an orthonormal basis of the orthogonal noise components, i.e., the columns of $W_\perp,$ and the covariance structure of the corresponding noise factors $\bchit.$ The covariance of the orthogonal noise is estimated via $\Sigmaepsperphat = (I_n - \Vhat \Vhat^\top) \Sigmayhat (I_n - \Vhat \Vhat^\top).$ Given an estimate $\hat{d}_\perp$ of $d_\perp,$ the estimate $\Wperphat$ of $W_\perp$ has, in its columns, the orthonormal eigenvectors of $\Sigmaepsperphat$ associated with the $\hat{d}_\perp$ largest eigenvalues. The corresponding eigenvalues form the diagonal of $\Sigmachihat,$ whose off-diagonal elements are zero. We resolve the problem of estimating $d_\perp$ differently than we did for estimating $d.$ It is possible to use the bootstrap procedure of \citet{Bathia2010} again, if one uses the reconstructions $\bepsperphat_t = (I_n - \Vhat \Vhat^\top) \byt$ as the ``observed'' data 
from which bootstrap samples are generated. This is, however,
sensitive to estimation error in $\Vhat.$ We opt for a more pragmatic approach and choose $\hat{d}_\perp$ such that the fraction of the explained variance of $\bepsperphat_t$ included in $\Wperphat$ exceeds a threshold $\tau_\perp \in (0,1),$
\[
\hat{d}_\perp = \min \left\{ 
d \in \{1,\ldots,n\}
\middle|
\sum_{i=1}^d \hat{\lambda}_i^{(\bepsperp)} \geq \tau_\perp \sum_{j=1}^n \hat{\lambda}_j^{(\bepsperp)}
\right\} ,
\]
where $\hat{\lambda}_i^{(\bepsperp)}$ are the eigenvalues of $\Sigmaepsperphat$ in descending order. The threshold is set by the practitioner. This pragmatic approach suffices, because we are not interested in $d_\perp$ itself. Instead, we use $\bepstperp$ as a regressor for $\bepstpara.$ {\it A priori}, there is no reason why relatively small factors of $\bepstperp$ solely predict large factors of $\bepstpara.$ The distinction between orthogonal and parallel noise is made by the dynamic space $\mathcal{M},$ whose orientation is typically unrelated to the directions of the noise factors. Neglecting small factors in $\bepstperp$ therefore entails little risk of missing a big denoising opportunity. In addition, estimating small factors in $\bepstperp$ is more sensitive to finite-sample estimation error. Neglecting them therefore reduces the risk of overfitting on noise.

\subsection{Asymptotic theory} \label{sec:asymptotics}
The expressions for the denoising errors under optimal denoising in Theorem \ref{thm:remaining_noise} are asymptotic. They can only be achieved with full knowledge of the underlying distribution or with an infinite amount of data available. Considering an observed time series sample $\left\{ \byt \right\}_{t=1}^T,$ a natural question is how fast the asymptotic denoising error is approached as a function of the time series length $T.$ 

The asymptotics of estimating the dynamic space $\mathcal{M}$ have been studied by \cite{Lam2011} and \cite{Lam2012} for finite-dimensional time series, and in \citet{Bathia2010} for functional time series. In both cases, the asymptotics of estimating a basis for $\mathcal{M}$ are derived under the assumption that the number of basis vectors, the dimension $d,$ is known. The fact that the methods used for estimating $d$ are consistent supports this assumption. We will do the same and assume that both $d$ and $d_\perp$ are known. 

Before stating the asymptotics of optimal denoising, we first present the main assumptions that were used in the derivation of that result. More technical assumptions are deferred to Appendix \ref{app:asymptotics}.

\begin{assumption} \label{ass:full_rank_xi}
    No linear combination of the components of $\{\bxit\}$ is white noise.
\end{assumption}
\begin{assumption} \label{ass:stochastic_indepdence}
    The random variables $\bxt$ and $\bvarepsilon_{t'}$ are stochastically independent $\forall t, t'$.
\end{assumption}
\begin{assumption} \label{ass:full_rank_K}
    Considering the definition of $K$ in Equation \eqref{eq:K}, there is at least one $k \in \{1, 2,\ldots,k_0\}$ for which $c_k \neq 0$ and $\Sigmaxilagged{k}$ is full-rank.
\end{assumption}
\begin{assumption} \label{ass:eigenvalues_K}
    The $d$ non-zero eigenvalues of $K$ are all distinct.
\end{assumption}
\begin{assumption} \label{ass:psi_mixing_y}
    The sequence $\{\byt\} \in \mathbb{R}^n$ is strictly stationary and $\psi$-mixing with coefficients such that $\sum_{\ell=1}^\infty \ell \, \psi^{1/2}(\ell) < \infty.$ Furthermore, $\Ex{\left| \byt \right|^4} < \infty$ elementwise.
\end{assumption}

Assumption \ref{ass:full_rank_xi} is equivalent to saying that $\Sigmaxi$ is full rank. If this assumption is violated, one can always remove the redundancy by redefining the factors. This assumption also implies that there exists $k \geq 1$ such that $\Sigmaxilagged{k}$ is full rank, which is relevant for Assumption \ref{ass:full_rank_K}. Assumption \ref{ass:stochastic_indepdence} is imposed for simplicity and is stronger than strictly necessary for deriving the asymptotics of optimal denoising, as long as certain conditions on the (lagged) covariance between signal and noise are satisfied. Assumption \ref{ass:full_rank_K} ensures that $\mathcal{M}$ is spanned by the eigenspace of $K$ associated with the non-zero eigenvalues, and Assumption \ref{ass:eigenvalues_K} is needed to derive a bound on estimation errors of the associated eigenvectors. As reported in \cite{Lam2011}, Assumption \ref{ass:psi_mixing_y} can be replaced by an $\alpha$-mixing condition, at the expense of additional technicalities.

\begin{theorem} \label{thm:asymptotics}
Given an observed sample $\left\{ \byt \right\}_{t=1}^T$ of length $T$ from a time series process satisfying Assumptions \ref{ass:full_rank_xi}-\ref{ass:psi_mixing_y}, the denoising error of optimal denoising $\bxthatopt = \Popthat \byt$ satisfies, for each fixed $t$, as $T \to \infty$,
\begin{equation}
    \begin{matrix*}[l]
        \norm{\bxthatopt - \bxt} = \bigOp{T^{-1/2}}  & \quad \mathrm{if} \quad \mathcal{M} \cap \mathcal{M}_\varepsilon = \emptyset, \\
        & \nonumber \\
        \norm{\bxthatopt - \bxt - \bepstc + \mathrm{Cov} \left[ \bepstc, \bepstperp \right] W_\perp ( \Sigmachi )^{-1} W_\perp^\top \bepstperp} = \bigOp{T^{-1/2}}  & \quad \mathrm{if} \quad \mathcal{M} \cap \mathcal{M}_\varepsilon \neq \emptyset.
    \end{matrix*}
\end{equation}
\end{theorem}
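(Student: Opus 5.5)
The plan is to reduce the statement to a bound on the operator-norm error of the estimated denoising matrix, $\norm{\Popthat - \Popt} = \bigOp{T^{-1/2}}$. Theorem~\ref{thm:remaining_noise} identifies $\Popt \byt - \bxt$ exactly with the target expression in each of the two cases. Hence in both cases the quantity to be bounded equals $\norm{(\Popthat - \Popt)\byt} \le \norm{\Popthat - \Popt}\,\norm{\byt}$. For fixed $t$, $\norm{\byt} = \bigOp{1}$ by the finite moments in Assumption~\ref{ass:psi_mixing_y}. (The dependence between $\byt$ and the sample used for $\Popthat$ is harmless, because we bound a product of norms rather than computing an expectation.) So everything reduces to a matrix perturbation statement, with no case distinction needed beyond Theorem~\ref{thm:remaining_noise}.

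\textbf{Step 1.} Establish $\norm{\Sigmaylaggedhat{k} - \Sigmaylagged{k}} = \bigOp{T^{-1/2}}$ for $k = 0,\ldots,k_0$. The $\psi$-mixing condition with $\sum \ell\,\psi^{1/2}(\ell)<\infty$ and the fourth moments give a central limit theorem / variance bound for the sample lagged covariances. The sample-mean correction contributes only $\bigOp{T^{-1}}$. Since $n$ is fixed, elementwise rates give norm rates. Consequently $\norm{\Khat - K} = \bigOp{T^{-1/2}}$ by the product rule, as in \cite{Lam2011}.

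\textbf{Step 2.} Pass to the dynamic space. Assumptions~\ref{ass:full_rank_xi} and~\ref{ass:full_rank_K} make the span of the top-$d$ eigenvectors of $K$ equal to $\mathcal{M}$, with a positive gap between the $d$-th eigenvalue and zero. Assumption~\ref{ass:eigenvalues_K} separates the individual eigenvalues. By Davis--Kahan (or the eigenvector perturbation lemma used by \cite{Lam2011}), the columns of $\Vhat$, after sign alignment, are within $\bigOp{T^{-1/2}}$ of those of $V$. In particular $\norm{\Vhat\Vhat^\top - UU^\top} = \bigOp{T^{-1/2}}$; only this sign-invariant projector is needed.

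\textbf{Step 3.} Treat the noise part. The estimator $\Sigmaepsperphat = (I-\Vhat\Vhat^\top)\Sigmayhat(I-\Vhat\Vhat^\top)$ converges at rate $T^{-1/2}$ to $(I-UU^\top)\Sigmay(I-UU^\top)$. This limit equals $\Sigmaepsperp = W_\perp\Sigmachi W_\perp^\top$, because $\bxt\in\mathcal{M}$ and Assumption~\ref{ass:stochastic_indepdence} kill the cross terms. The quantity entering $\Popt$ is $W_\perp\Sigmachi^{-1}W_\perp^\top$, the Moore--Penrose pseudoinverse of $\Sigmaepsperp$. The estimator uses the pseudoinverse of the rank-$d_\perp$ truncation of $\Sigmaepsperphat$, with $d_\perp$ known. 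The smallest nonzero eigenvalue of $\Sigmaepsperp$ is bounded away from zero because $\Sigmachi$ is full rank. Therefore the top-$d_\perp$ eigenprojector is Lipschitz by Davis--Kahan, and the truncated pseudoinverse is a smooth function of the matrix near $\Sigmaepsperp$. This gives $\norm{\Wperphat\Sigmachihat^{-1}\Wperphat^\top - W_\perp\Sigmachi^{-1}W_\perp^\top} = \bigOp{T^{-1/2}}$. This avoids any need for individual eigenvectors of $\Sigmaepsperp$ to be distinct.

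\textbf{Step 4.} Combine. Write $\Popthat - \Popt$ as a telescoping sum of the three factor errors $\Vhat\Vhat^\top - UU^\top$, $\Sigmayhat - \Sigmay$, and the pseudoinverse error, each multiplied by factors that are $\bigOp{1}$. The conclusion then follows from the reduction above.

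I expect Step 3 to be the main obstacle. It requires showing that the rank truncation keeps exactly the right eigenspace with probability tending to one. It also requires that the estimation error from $\Vhat$ enters the noise covariance only at first order. Both follow from the eigengap of $\Sigmaepsperp$ at $d_\perp$, but this needs care when $d_\perp<n-d$, where the $(d_\perp+1)$-th population eigenvalue is zero. I would first try the resolvent or contour-integral representation of the spectral projector, which handles this uniformly.
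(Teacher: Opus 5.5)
Your proposal is correct, and it takes a genuinely different route from the paper.

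\textbf{The paper's route.} It proves only the case $\mathcal{M}\cap\mathcal{M}_\varepsilon=\emptyset$ explicitly. It bounds $\norm{\bxthatopt-\bxt}^2$ by a term-by-term decomposition in which every population term contains a vanishing factor. It then controls the sample versions through convergence of the individual estimated matrices:
\begin{itemize}
\item $\norm{\Khat-K}^2=\bigOp{T^{-1}}$, from a V-statistic representation of $\Sigmaylaggedhat{k}\Sigmaylaggedhat{k}^\top$ and Sen's lemma;
\item $\Vhat-V$ and $\Wperphat-W_\perp$, from the eigenvector perturbation bound of \citet{Horvath2012};
\item $\Sigmachihat-\Sigmachi$, from Bauer--Fike.
\end{itemize}

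\textbf{Your route.} You use Theorem~\ref{thm:remaining_noise} as an exact identity, which reduces both cases at once to $\norm{\Popthat-\Popt}\,\norm{\byt}$. You then track only basis-invariant objects. The first is the projector $\Vhat\Vhat^\top$, handled by Davis--Kahan with the gap between the $d$-th eigenvalue of $K$ and $0$. The second is the truncated pseudoinverse $\Wperphat\Sigmachihat^{-1}\Wperphat^\top$, with gap $\min\operatorname{diag}(\Sigmachi)>0$ versus $0$.

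\textbf{What each buys.} Your argument is noticeably cleaner:
\begin{itemize}
\item Assumption~\ref{ass:eigenvalues_K} becomes superfluous.
\item No sign convention for eigenvectors is needed.
\item You never require the nonzero eigenvalues of $\Sigmaepsperp$ to be distinct. The paper's use of \citet{Horvath2012} for $\Wperphat$ tacitly needs this but does not assume it.
\item Your direct variance bound for sample autocovariances under mixing replaces the V-statistic machinery, which the paper adopts to stay close to \citet{Bathia2010}.
\end{itemize}
The paper's decomposition, in turn, makes visible which population error terms vanish and why.

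\textbf{Two minor remarks.}
\begin{itemize}
\item The top-$d$ eigenvectors of $K$ span $\mathcal{M}$ with a positive gap only when $K$ is positive semidefinite, e.g.\ $c_k\ge 0$ as in the default choice. The paper makes the same tacit assumption.
\item The obstacle you anticipate in Step~3 is not special to $d_\perp<n-d$. The $(d_\perp+1)$-th eigenvalue of $\Sigmaepsperp$ is always $0$, since the directions of $\mathcal{M}$ lie in its kernel. So Weyl's inequality on the event $\norm{\Sigmaepsperphat-\Sigmaepsperp}<\tfrac12\min\operatorname{diag}(\Sigmachi)$, combined with Wedin's equal-rank pseudoinverse bound, handles every case uniformly.
\end{itemize}
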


The proof of Theorem \ref{thm:asymptotics} is deferred to Appendix \ref{app:asymptotics}.
For the estimation of $\mathcal{M}$ we adapted the analysis of \citet{Bathia2010}, based on V-statistics, to the setting of finite-dimensional time series. Building on this result, we then derived the asymptotic behavior of optimal denoising by means of existing results in time series asymptotics and perturbation theory of eigenvalues and eigenvectors.

\section{Simulation studies} \label{sec:simulation}

This chapter uses simulations to assess three aspects of the denoising performance of (MSE-)optimal denoising: recovery of the unobserved signal, i.e.,~the actual denoising (Section~\ref{sec:results}), improved parameter estimation (Section~\ref{sec:parameter_estimation}) and improved one-step-ahead forecasting (Section~\ref{sec:forecasting}). The data-generating process of the simulated data is inspired by \citet{Bathia2010, Chen2022}. The performance of (MSE-)optimal denoising is primarily compared with orthogonal denoising, as described in Section~\ref{sec:denoising}.

However, note that both methods use a consistent estimator for the dynamic space (i.e.~the eigenspace of $K$ defined in Equation~\eqref{eq:K}). Alternatively, both methods could also be applied with an asymptotically biased estimate of $\mathcal{M},$ using the naive estimator of the covariance matrix $\Sigmay = \Sigmax + \Sigmaeps$ of the observed time series. The dominant noise directions are then erroneously attributed to the dynamic space. As a consequence, there is no orthogonal noise component that can be denoised or used to remove the parallel noise component. This means that neither orthogonal denoising nor (MSE-)optimal denoising using an asymptotically biased estimate of $\mathcal{M}$ are able to meaningfully remove noise from the observed time series, even asymptotically. Apart from Figure~\ref{fig:dependence_d_p}, we leave out the denoising results of the biased methods, with the understanding that in Figures~\ref{fig:dependence_n}-\ref{fig:dependence_thetas} they would have led to zero denoising on average. For parameter estimation and forecasting, we do include a biased method (see Figures~\ref{fig:parameter_estimation} and \ref{fig:forecasting_performance}).

\subsection{Setup} \label{sec:setup}
We consider a simulated time series $\{ \byt \}_{t=1}^T$ of length $T,$ where $\byt \in \mathbb{R}^n,$ and where $\byt = \bxt + \bepst + \bepstnull.$ The terms $\bxt$ and $\bepst$ correspond to the signal and noise of the time series considered throughout Section \ref{sec:methodology}. We have added a new term $\bepstnull,$ which will be specified later and which represents unstructured noise.

For the data-generating process of the signal we choose a factor-model form $\bxt = U \bxit,$ where the factors $\bxit$ follow a $d$-dimensional VAR($p$) process
\begin{equation} \label{eq:VAR_process}
\bxit = \sum_{s=1}^p A_s \bxi_{t-s} + \bet, \qquad \text{where} \quad \bet \thicksim \mathcal{N} \! \left( {\bm 0}, \Sigma_{\mathrm{\bf e}} \right).
\end{equation}
To ensure generality of the simulation results, we use an algorithm that randomly generates VAR($p$) processes that are stationary, have a diagonal covariance matrix $\Sigmaxi,$ and for which $\Tr{\Sigmaxi}=1.$ Furthermore, we demand that $\Tr{\Sigma_{\mathrm{\bf e}}}$ is  within a specified range, making it easier to compare asymptotic behavior of different VAR($p$) processes. The details of this algorithm are given in Appendix \ref{app:simulation_details}. Also note that this form of the VAR($p$) process automatically ensures that $\Ex{\bxt} = U \Ex{\bxit}={\bm 0}.$

The factor loading matrix $U$ has elements
\[
U_{ai} = \sqrt{1 - \lambda} \, \sqrt{\frac{2}{n}} \cos \left( \frac{2 \pi i}{n}a \right), 
\]
where $a=1,\ldots,n$ and $i=1,\ldots,d.$ The parameter $\lambda \in (0,1)$ normalizes the signal such that $\tracevar{\bxt} = 1 - \lambda.$ The choice of $U$ means that the signal $\bxt$ is a linear combination of discretized trigonometric functions, represented at an increasingly fine resolution as $n$ increases. We emphasize that the use of basis functions of $\mathcal{M}$ that are smooth is purely a matter of convenience and does not affect the denoising. In fact, the denoising procedure is equivariant under a permutation of the coordinate index $a$: randomly shuffling the observed data $\byt$ in the index $a$ does not change the denoising procedure in any way and leads to an identical denoised signal. Hence, a choice for basis functions that are smooth does not impact the generality of the simulation results. This is important, because real-world high-dimensional time series data often lacks this smoothness property (e.g., stock market prices).

For the structured part of the noise we also use a factor-model form $\bepst = W \bxiepst,$ but assume that the factors are serially independent: $\bxiepst \thicksim \mathcal{N} \! \left( {\bm 0}, \Sigmaxieps \right)$ are i.i.d.~over time, with
\[
\Sigmaxieps = \mathrm{diag} \left( \sigma_{\bvarepsilon, 1}^2, \ldots, \sigma_{\bvarepsilon, d_\varepsilon}^2 \right) \quad \text{and} \quad \sigma_{\bvarepsilon, j} = \frac{c_\varepsilon^{j-1}}{\sqrt{\sum_{k=1}^{d_\varepsilon} c_\varepsilon^{2(k-1)}}}
\]
with $c_\varepsilon \in (0,1)$.
This noise consists of $d_\varepsilon$ modes whose contributions to the total variance decrease geometrically, while the total variance is normalized to $\Tr{\Sigmaxieps}=1.$ The factor loadings are given by
\[
W_{aj} = \sqrt{\lambda} \sqrt{1-\delta} \, \sqrt{\frac{2}{n}} \left\{ \cos (\theta_j) \cos \left( \frac{2 \pi j}{n}a \right) + \sin (\theta_j) \sin \left( \frac{2 \pi j}{n}a \right) \right\},
\]
where $a=1,\ldots,n$ and $j=1,\ldots, d_\varepsilon.$ The normalization factors are chosen such that $\tracevar{\bepst} = \lambda (1 - \delta).$ The angles $\theta_j$ specify the orientation of the noise space $\mathcal{M}_\varepsilon$ relative to the dynamic space $\mathcal{M}.$ If $\theta_j=0$ (for $j \leq d$), the $j$-th noise mode lies inside the dynamic space. If $\theta_j = \pi/2,$ the noise mode is in the orthogonal complement of $\mathcal{M}.$

Finally, for the unstructured part of the noise all elements are independent across both coordinates and time. We take $\bepstnull \thicksim \mathcal{N} \! \left( {\bm 0}, \frac{\lambda \delta}{n} I_n \right),$ independently over time, making $\tracevar{\bepstnull}=\lambda\delta.$

With the above specification of the data-generating process, the total variance of the observed time series is conveniently normalized to one, $\tracevar{\byt} = 1.$ Most of our simulations do not include unstructured noise ($\delta=0$), implying $\tracevar{\bepst} = \lambda,$ which leads to the interpretation of the parameter $\lambda$ as the noise fraction of the total variance of the observed time series. In a similar fashion, $\delta = \tracevar{\bepstnull} / \tracevar{\bepst}$ is the fraction of unstructured noise variance in the total noise variance.

\begin{table}[]
    \centering
    \begin{tabular}{|c|c|}
        \hline
         {\bf Parameter} & {\bf Default value} \\ \hline
         $d$ & 4 \\
         $p$ & 2 \\ 
         $n$ & 50 \\ \
         $d_\varepsilon$ & 8 \\
         $c_\varepsilon$ & 2/3 \\
         $\lambda$ & 0.2 \\
         $\delta$ & 0.0 \\ 
         $T$ & 400 \\
        \hline
    \end{tabular}
    \caption{Default values of simulation parameters. Unless stated otherwise, these values are used in the simulations presented in Section \ref{sec:simulation}.}
    \label{tab:default_values_simulation}
\end{table}

To assess the general applicability of optimal denoising in our simulations, we vary one parameter at a time in the data-generating process while keeping the other parameters fixed. Table \ref{tab:default_values_simulation} shows the default values of the simulation parameters. Unless stated otherwise, these values are being used in the simulations presented here. When generating the factors $\bxit$ of the VAR($p$) process underlying the signal, we initialize the factors at zero and we use a burn-in phase of 200 time steps.

\subsection{Denoising results} \label{sec:results}


\begin{figure}
    \centering
    \includegraphics[scale=0.8]{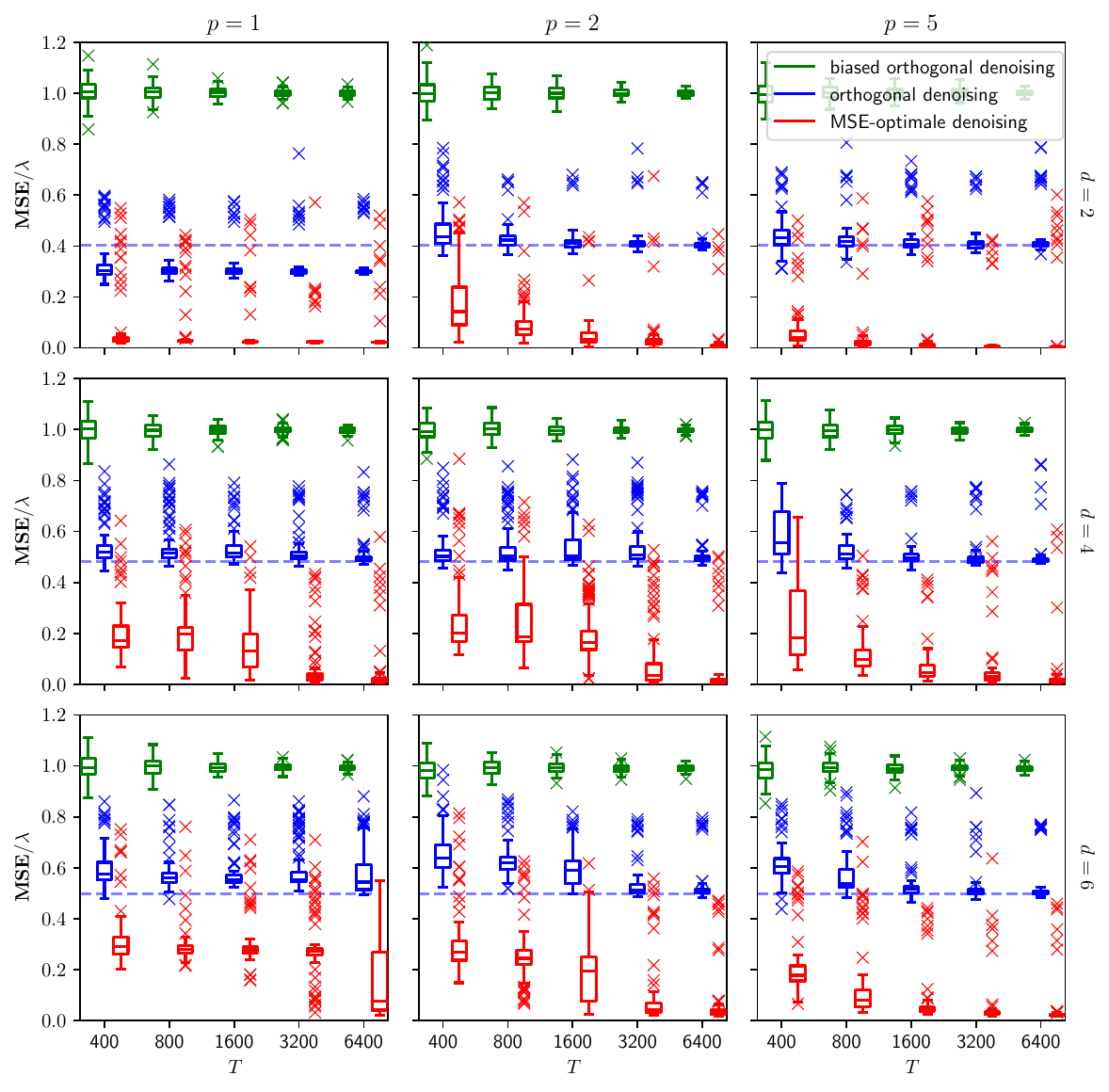}
    \caption{(asymptotic) denoising performance, measured as $\mse/\lambda,$ as a function of the dimension $d$ and the lag order $p$ of the VAR($p$) process. Each boxplot consists of 100 independent simulations. For the sake of a baseline, we have included (in green) orthogonal denoising based on the standard asymptotically biased estimate of the covariance matrix of the signal.}
    \label{fig:dependence_d_p}
\end{figure}

Given a time series $\{ \byt \}_{t=1}^T$, estimation and denoising proceeds as described in Section \ref{sec:estimation}. Since the specific choice of $K$ in Equation \eqref{eq:K} has little influence on the estimation of $\mathcal{M}$, we fix $K$ by setting $c_1=c_2=1$ and all other coefficients to zero \citep{Bathia2010, Lam2011}. For the bootstrap procedure used to estimate $d,$ described in Appendix \ref{app:bootstrap}, we set $B=300$ and $\alpha=0.05.$ To estimate $d_\varepsilon$ we set the cutoff to $\tau_\perp=0.95.$

For both orthogonal denoising and optimal denoising, we measure denoising performance in terms of the estimated mean squared error
\[
\mathrm{\bf MSE} = \frac{1}{T} \sum_{t=1}^T \normsq{\bxthatden - \bxt}.
\]
In case of trivial denoising, $\Pden=I_n$, so the observed time series is unchanged and $\Ex{\mathrm{\bf MSE}} = \lambda.$ Thus, the no-denoising case yields an $\mathrm{\bf MSE}$ of $\lambda$, which we use as a baseline. Values with $\mathrm{\bf MSE}/\lambda < 1$ indicate beneficial denoising, whereas values with $\mathrm{\bf MSE}/\lambda > 1$ indicate that denoising is counterproductive.

In Figure~\ref{fig:dependence_d_p} we report the denoising performance, measured by the normalized mean squared error $\mse / \lambda,$ as a function of the time series length $T,$ for different values of the dimension $d$ and the lag order $p$ of the VAR($p$) process. Each boxplot is based on 100 independent simulations. MSE-optimal denoising outperforms orthogonal denoising for all VAR($p$) specifications. In accordance with our theoretical analysis, the MSE-optimal denoiser converges to zero denoising error, whereas orthogonal denoising exhibits an irreducible denoising error equal to the total variance of $\bepstpara.$ Both methods produce a large number of outliers, which can be attributed to misestimation of the VAR dimension $d$ by the bootstrap procedure (see Appendix~\ref{app:bootstrap} for details).

As a baseline, Figure~\ref{fig:dependence_d_p} also includes results for orthogonal denoising based on the naive, asymptotically biased, estimate of the covariance matrix of the signal. On average, this method does not remove any noise from the observed time series. As explained in the introduction of Section~\ref{sec:simulation}, MSE-optimal denoising would give a similar poor performance when based on the (inconsistent) naive estimate of $\mathcal{M}$.

Figures~\ref{fig:dependence_n}-\ref{fig:dependence_thetas} in Appendix~\ref{app:additional_simulation_results} report results from the same experimental design with other hyperparameters varied. In Figure~\ref{fig:dependence_n} we find that the dimension $n$ of the observed time series $\{\byt\}$ affects denoising performance only when $n$ ($\lesssim 10$). This reflects that the simulated dynamic and noise spaces have dimensions  $d=4$ and $d_\varepsilon=8$, respectively. Identifying all $d + d_\varepsilon = 12$ modes therefore requires $n=12$ observed dimensions. For smaller $n$, complete noise removal is impossible, even asymptotically. 

Figure~\ref{fig:dependence_trace_sigma_e} shows how denoising performance varies with $\text{Tr}\left[ \Sigma_{\mathrm{\bf e}} \right],$ the noise-to-signal ratio under the normalization $\tracevar{\bxit}=1.$ Larger noise-to-signal ratios require longer time series for denoising performance to get close to its asymptotic value. When the noise-to-signal ratio is large and the time series is short, denoising can be harmful ($\mse /\lambda > 1$): the denoised signal $\bxtden$ then is a less accurate approximation of $\bxt$ than the observed time series $\{\byt\}$. This is due to overfitting and may possibly be remedied with a form of (Ridge) regularization. 
We attribute this behavior to overfitting, which may be mitigated by  regularization. This extension is beyond the scope of the present paper, as regularization is unnecessary under the assumed finite dimensionality of the signal and noise subspaces.

Figure~\ref{fig:dependence_lambda} examines the effect of $\lambda,$ the noise fraction in the observed time series,  for time series of the default length $T=400.$ For very small noise fractions ($\lambda = 0.01$) orthogonal denoising outperforms MSE-optimal denoising. However,  the right panel of Figure~\ref{fig:dependence_lambda} shows that  that this difference is negligible in absolute terms. Even for very large noise fractions ($\lambda = 0.99$) optimal denoising removes about 70\% of the noise on average.

Figure~\ref{fig:dependence_delta} investigates the effect of changing the ratio of structured noise $\bepst$ and unstructured noise $\bepstnull$ by varying $\delta.$ 

In Figure~\ref{fig:dependence_thetas} we vary the orientation of the noise space relative to the dynamic space. In the left panel the second noise component is parallel to the dynamic space. In accordance with Theorem~\ref{thm:remaining_noise}, MSE-optimal denoising then exhibits an asymptotically irreducible denoising error. In the right panel the entire noise space lies in the orthogonal complement of the dynamic space and MSE-optimal denoising performs similarly to orthogonal denoising.

\subsection{Improved parameter estimation} \label{sec:parameter_estimation}
In this section we use the simulation setup to investigate whether model-agnostic optimal denoising can be used as a preprocessing step that improves parameter estimation of the underlying low-dimensional model that drives the serial dependence, which for our simulations is the VAR($p$) process defined in Equation \eqref{eq:VAR_process}. It should be noted that this representation of the VAR($p$) process cannot be estimated without bias, because the dynamic space $\mathcal{M}$ is spanned by the columns of $V$ rather than the columns of $U.$ If we define factors $\bzetat = V^\top \bxt,$ the VAR($p$) process of Equation \eqref{eq:VAR_process} can be written in an alternative representation
\begin{equation} \label{eq:VAR_process_V_represenation}
\bzetat = \sum_{s=1}^p B_s \bzeta_{t-s} + \bet', \qquad \text{where} \quad \bet' \thicksim \mathcal{N} \! \left( {\bm 0}, \Sigma_{\mathrm{\bf e'}} \right),
\end{equation}
and where $B_s = V^\top U A_s U^\top V$ and $\bet' = V^\top U \bet$ link both representations. Given a chosen denoising method represented by $\Pden,$ we use the denoised factors $\bzetat^{\rm den} = V^\top \bxtden$ to estimate the parameters of the VAR($p$) process by means of multivariate least squares estimation \citep{Lutkepohl2005}. We fit the models with the \verb+statsmodels+ Python module and in order to compare the parameter values of $B_s$ for $s=1,2,\ldots,p$ the lag order $p$ is treated as known and set to its true oracle value.

\begin{figure}
    \centering
    \includegraphics[scale=0.8]{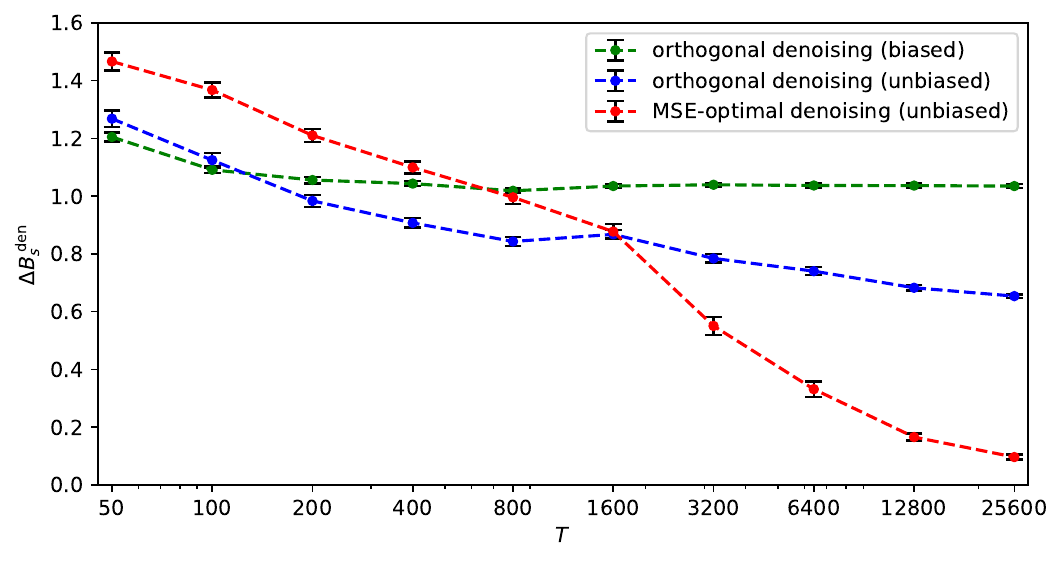}
    \caption{the (asymptotic) behavior of the parameter estimation error after denoising. Each point is the average of 100 independent simulations and the standard error. We have assumed the lag order is known, since the plotted measure $\Delta B_s^{\rm den}$ only makes sense if the lag order of the estimated model equals the true $p.$}
    \label{fig:parameter_estimation}
\end{figure}

As a measure of parameter estimation performance we define
\[
\Delta B_s^{\rm den} = \Ex{\frac{1}{p} \sum_{s=1}^p \frac{|| \hat{B}_s^{\rm den} - B_s ||_F}{|| B_s ||_F}},
\]
where $\hat{B}_s^{\rm den}$ denotes the estimator of the VAR($p$) coefficient matrix at las $s$ after denoising. Figure~\ref{fig:parameter_estimation} compares the performance of optimal denoising, orthogonal denoising and no denoising ($\Pden = I_n$). We find that parameter estimation becomes consistent after optimal denoising, with the estimation error converging to zero as the time series length increases. In contrast, orthogonal denoising and no denoising do not yield consistent parameter estimates.

\subsection{Improved one-step-ahead forecasting} \label{sec:forecasting}
Subsequently, the estimated VAR($p$) model can be used to forecast the observed time series. We consider one-step-ahead forecasting,
\[
\byt^{(1)} = V \sum_{s=1}^p B_s V^\top \Pden {\bf y}_{t-s},
\]
and measure forecasting performance in terms of the mean squared error,
\[
\Ex{|| \byt - \hat{\bf y}_t^{(1)}||^2},
\]
where $\hat{\bf y}_t^{(1)}$ is the sample version of the one-step-ahead forecast. Here, we do not assume the lag order $p$ to be known. Instead, we estimate it using order selection with the Akaike information criterion and \verb+statsmodels+' default maximum number of lags to be checked ($12 * (T/100)^{1/4}$).

In Figure~\ref{fig:forecasting_performance} we see that optimal denoising is again consistent, as its forecast mean squared error  converges to the theoretical lower bound $\lambda + (1-\lambda) \Tr{\Sigma_\mathrm{\bf e}}.$ Moreover, it outperforms orthogonal denoising and forecasting without denoising. We also note that the diverging forecast errors for short time series can be attributed to  poor lag order estimation. Finally, we note that orthogonally denoised and non-denoised forecasting both improve for large $T$ when the lag order is not given and must be estimated. The lag order is typically overestimated, which gives the VAR model, estimated with an asymptotic bias, more degrees of freedom to capture the serial dependence in the data.

\begin{figure}
    \centering
    \includegraphics[scale=0.7]{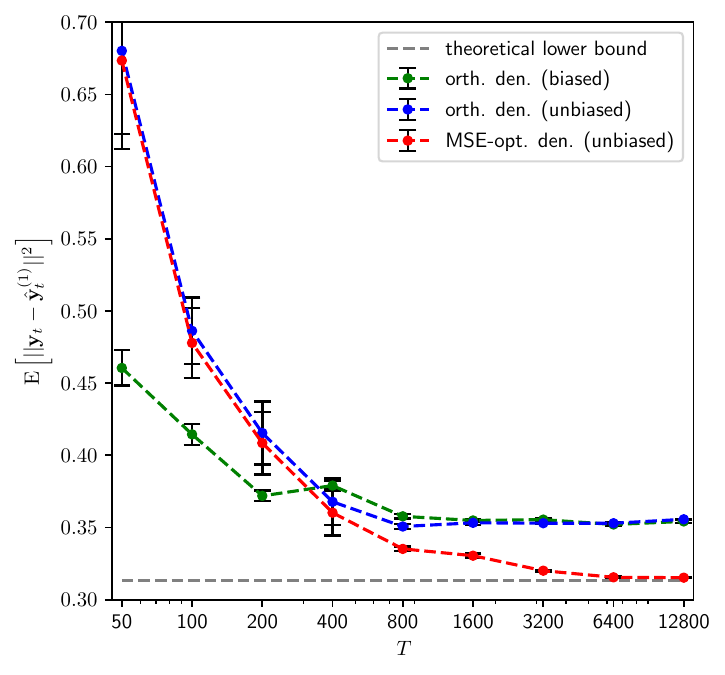}
    \caption{the (asymptotic) behavior of the mean squared error of one-step-ahead forecasting after denoising. Each point is the average of 100 independent simulations and the standard error.}
    \label{fig:forecasting_performance}
\end{figure}

\section{Applications} \label{sec:applications}

We now illustrate the proposed noise reduction method using two real-world high-dimensional time series datasets: a panel of daily stock returns and a panel of monthly macroeconomic indicators. Both settings are commonly modeled through a low-dimensional factor structure, but they differ substantially in their signal-to-noise characteristics, which allows us to assess the method under conditions that are quite different from the simulation study of Section~\ref{sec:simulation}. Throughout, we use the same estimation procedure as in Section~\ref{sec:estimation}, with $K$ defined via $c_1=c_2=1$ (i.e.~$k_0=2$) and, when the bootstrap procedure of Appendix~\ref{app:bootstrap} is used, $B=300$ bootstrap samples and significance level $\alpha=0.05.$ The dimension of the orthogonal noise component $d_\perp$ is estimated with the explained-variance rule of Section~\ref{sec:estimation}, using cutoff $\tau_\perp=0.95.$

\subsection{Data} \label{sec:application_data}
The first dataset consists of daily log returns of the $n=459$ constituents of the S\&P~500 index for which a complete price history is available over the sample period from 5 January 2015 through 13 February 2026, giving a time series of length $T=2795.$ This is a typical example of a large cross-section of financial return series in which any common dynamic structure, if present at all, is known to be weak and easily obscured by idiosyncratic and market-wide noise.

The second dataset is a panel of the first $n=20$ monthly US~macroeconomic indicators (measures of output, income and industrial production) in the Federal Reserve Economic Data (FRED) 2026-02-MD snapshot obtained from the macroeconomic database maintained by the Federal Reserve Bank of St.\ Louis, containing monthly data from January 1959 until January 2026.  Although the dataset contains 127 different indicators, we reduced the analysis to the leading indicators listed in the first 20 columns, ignoring the secondary and sometimes incomplete indicators in the later columns. 
For each series we compute first differences and standardize the result to unit sample variance, giving a full sample of length $T=802,$ spanning February 1959 to November 2025, some of the 20 indicators being available only up until then in the dataset. For the leave-one-year-out cross-validation exercise of Section~\ref{sec:application_forecasting} we restrict attention to the 65 complete calendar years contained in this sample, from 1960 up to and including 2024 ($T=780$ months). Unlike the daily returns data of the S\&P~500 index, monthly macroeconomic aggregates are widely believed to be governed by a small number of common dynamic factors that account for a substantial share of their total variance.

For both datasets we inspect the eigenvalues of $\Khat$ (Equation~\eqref{eq:K}) and apply the bootstrap procedure of Appendix~\ref{app:bootstrap} to estimate the dimension $d$ of the dynamic space. In both cases the eigenvalues of $\Khat$ show a pronounced elbow, with only the first few eigenvalues clearly separated from a long tail of eigenvalues close to zero, yet the bootstrap counter of Algorithm~\ref{alg:bootstrap} behaves non-monotonically in $d_0$ and does not settle on a stable estimate of $d,$ a pattern consistent with the underestimation of $d$ for noisy or moderately sized samples documented in Appendix~\ref{app:bootstrap}. We therefore follow the alternative suggested by \citet{Bathia2010} themselves and set $d$ by visual inspection of the eigenvalues of $\Khat,$ rather than by the sequential bootstrap algorithm. For both applications this leads us to choose $d=4,$ which, together with $p=2$ for the VAR($p$) model used in forecasting, coincides with the default values used throughout the simulation study (Table~\ref{tab:default_values_simulation}), making the two sets of results easier to compare.

\subsection{Denoising results} \label{sec:application_denoising}
\begin{table}[]
    \centering
    \begin{tabular}{|l|c|c|}
        \hline
        {\bf Method} & {\bf S\&P~500 log-returns} & {\bf Macroeconomic panel} \\ \hline
        biased orthogonal denoising & 45.82\% & 78.07\% \\
        orthogonal denoising (unbiased) & 39.96\% & 73.72\% \\
        MSE-optimal denoising (unbiased) & 10.22\% & 46.29\% \\
        \hline
    \end{tabular}
    \caption{fraction of the total variance of $\byt$ that is attributed to the estimated signal $\bxtden$ ($\Tr{\Var{\bxthatden}}/\Tr{\Sigmayhat}$) by each denoising method, for both applications, using the handpicked dimension $d=4.$}
    \label{tab:application_denoising}
\end{table}
Table~\ref{tab:application_denoising} reports, for both datasets, the fraction of the total variance of $\byt$ that each denoising method attributes to the signal $\bxtden.$ For both applications this fraction drops sharply when moving from orthogonal to MSE-optimal denoising: from about 40\% to about 10\% for the S\&P~500 returns, and from about 74\% to about 46\% for the macroeconomic panel. This is consistent with our interpretation of the difference between the two methods: orthogonal denoising treats the entire component of $\byt$ within $\mathcal{M}$ as signal, even though part of it may be a parallel noise component $\bepstpara$ 
correlated with the orthogonal noise $\bepstperp.$ MSE-optimal denoising uses this correlation to remove part of $\bepstpara$ as well. That this reclassification is so pronounced in both datasets suggests that real-world noise is often not orthogonal to the dynamic space, reinforcing the relevance of the optimal denoising approach developed in this paper.

The effect is considerably stronger for the S\&P~500 returns than for the macroeconomic panel. Only about one tenth of the total variance of the daily returns is attributed to the signal after MSE-optimal denoising, compared with roughly half of the total variance for the macroeconomic panel. This is in line with the common view that daily stock returns are hard to predict, so that little of their variance can be attributed to a low-dimensional dynamic component, whereas monthly macroeconomic aggregates are widely believed to share a more substantial common dynamic component. Because the fraction of the variance of the S\&P~500 returns identified as signal by MSE-optimal denoising is so small, any downstream model fitted to the corresponding denoised factors is estimated on what is effectively very little signal. Indeed, one-step-ahead forecasts of the denoised S\&P~500 returns based on a VAR($p$) model, for any of the three denoising methods of Table~\ref{tab:application_denoising}, do not meaningfully improve on the total variance of $\byt$ itself, so that no method can reliably be said to outperform the others. We therefore do not report forecasting results for the S\&P~500 application and instead turn to one-step-ahead forecasting for the macroeconomic panel, where the signal fraction is large enough for the comparison to be informative.

\subsection{One-step-ahead forecasting} \label{sec:application_forecasting}
\begin{figure}
    \centering
    \includegraphics[scale=0.48]{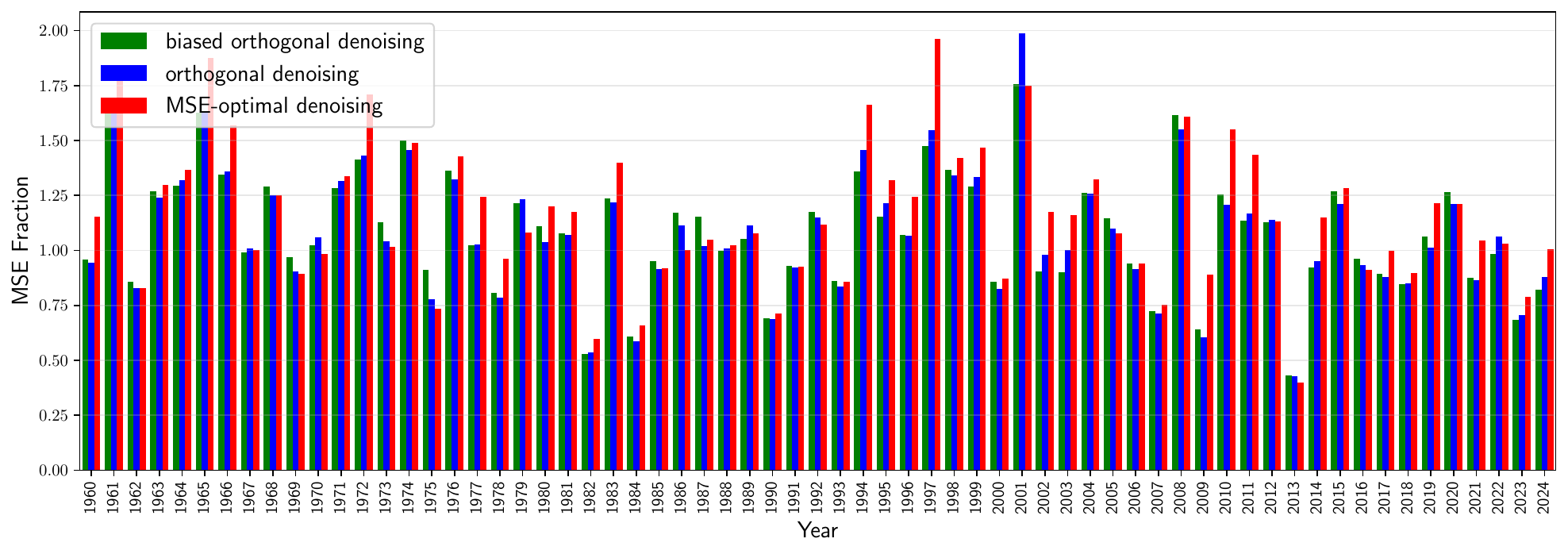}
    \caption{one-step-ahead forecasting MSE (as a fraction of the total variance that year) for the macroeconomic panel, per calendar year. Each year is left out of the training sample in turn and the corresponding VAR($p$) model, with $p=2$ and (handpicked) $d=4,$ is used to forecast that year.}
    \label{fig:application_macrodata_per_year}
\end{figure}
For the macroeconomic panel we assess one-step-ahead forecasting performance of the VAR($p$) model, with $p=2$ and $d=4,$ estimated on the denoised factors of each of the three methods in Table~\ref{tab:application_denoising}. We use a leave-one-year-out cross-validation scheme: for each of the 65 complete calendar years in the sample, that year is removed from the training data, the model is re-estimated on the remaining months, and the mean squared forecast error is computed for the left-out year. Figure~\ref{fig:application_macrodata_per_year} shows the resulting mean squared error for every calendar year and every method, while Table~\ref{tab:application_forecasting} reports the average performance over all 65 years, together with the corresponding standard error.

\begin{table}[]
    \centering
    \begin{tabular}{|l|c|}
        \hline
        {\bf Method} & {\bf Average MSE (std.~error)} \\ \hline
        biased orthogonal denoising & 22.90 (11.29) \\
        orthogonal denoising (unbiased) & 22.36 (10.83) \\
        MSE-optimal denoising (unbiased) & 23.47 (10.87) \\
        \hline
    \end{tabular}
    \caption{average one-step-ahead forecasting MSE for the macroeconomic panel over the 65 leave-one-year-out folds, with the standard error of the mean across folds in parentheses.}
    \label{tab:application_forecasting}
\end{table}

Unlike in the simulation study of Section~\ref{sec:forecasting}, the three methods perform comparably on average, with overlapping standard errors, and MSE-optimal denoising does not systematically improve on orthogonal denoising here. Figure~\ref{fig:application_macrodata_per_year} shows that this holds fairly uniformly across calendar years, in that the three methods track each other closely from year to year. This is a markedly different picture from the simulation results of Figure~\ref{fig:forecasting_performance}, where MSE-optimal denoising clearly dominates. A likely explanation is that, unlike the simulated data, the macroeconomic panel is not generated by a low-dimensional VAR($p$) process with additive white noise. As a result any of the three (denoised) representations of the data are likely to be similarly misspecified as inputs to a linear VAR forecasting model, leaving comparatively little room for noise reduction to improve out-of-sample forecasts.

\section*{Acknowledgments}

The authors are grateful to Peter Boswijk, Sven Otto, Peter Zadrozny and participants at various seminars and conferences, including at the  10th International Workshop on Applied Probability in Thessaloniki, the Bernoulli-IMS 11th World Congress in Probability and Statistics in Bochum, for their comments and suggestions.

\newpage
\appendix

\section{Bootstrap tests for estimating $d$} \label{app:bootstrap}
We use the bootstrap test proposed by \citet{Bathia2010} to estimate the dimension of the dynamic space $d.$ In this appendix we describe the method and evaluate its performance on the simulated data of Section~\ref{sec:simulation} and assess its effect on our denoising approach.

Using the notation of \citet{Bathia2010}, we denote the non-zero eigenvalues of the matrix $K,$ which was defined in Equation~\eqref{eq:K}, by $\theta_1 \geq \theta_2 \geq \ldots \geq \theta_d > 0.$ When we test the null hypothesis $H_0: d = d_0,$ we focus on testing its implication $\theta_{d_0 + 1} = 0$ under the null hypothesis. Let $\hat{\theta}_1 \geq \hat{\theta}_2 \geq \ldots \geq 0$ be the eigenvalues of $\Khat,$ the sample version of $K$ based on an observed sample $\{ \byt \}_{t=1}^T,$ which typically has $n > d$ strictly positive eigenvalues. We reject the null hypothesis if we observe $\hat{\theta}_{d_0 + 1} > l_\alpha,$ where $l_\alpha$ is a threshold value to be determined by a bootstrap test and $\alpha \in (0,1)$ is the significance level. \citet{Bathia2010} suggest visual inspection of the estimated eigenvalues to choose $d_0.$ Instead, we perform a series of repeated hypothesis tests for increasing value of $d_0,$ starting at $d_0=1,$ until the null hypothesis is not rejected. Algorithm~\ref{alg:bootstrap} gives a detailed description of the procedure we use.

\begin{algorithm}
\caption{Bootstrap method to estimate $d$}\label{alg:bootstrap}
\begin{algorithmic}[1]
\Require observed sample $\{ \byt \}_{t=1}^T;$ significance level $\alpha \in (0,1);$ bootstrap sample size $B.$
\State $d_0 \gets 0; \texttt{reject} \gets \textbf{True}$
\While{\texttt{reject}}
    \State $d_0 \gets d_0 + 1$
    \State Let $\widehat{V}_{d_0}$ be the $(n \times d_0)$-matrix whose columns are the first $d_0$ eigenvectors of $\Khat.$
    \State $\hat{\bf y}_t \gets \bar{\bf y}_t = \widehat{V}_{d_0} + \hat{\eta}_t,$ where $\hat{\eta}_t = \widehat{V}_{d_0}^\top (\byt - \bar{\bf y}_t)$ and $\bar{\bf y}_t = \tfrac{1}{T} \sum_{t=1}^T \byt.$
    \State $\hat{\bm \varepsilon}_t \gets \byt - \hat{\bf y}_t$
    \State $\texttt{counter} \gets 0$
    
    \For{$j = 1,2,\ldots,B$}

        \State $\byt^* \gets \hat{\bf y}_t + \bepst^*,$ with $\bepst^*$ randomly drawn (with replacement) from $\{\hat{\bm \varepsilon}_1, \hat{\bm \varepsilon}_2, \ldots, \hat{\bm \varepsilon}_T \}.$
        \State Compute $K^*$ based on sample $\{ \byt^* \}_{t=1}^T$ and let $\theta^*_{d_0 +1}$ be its $(d_0 +1)$th eigenvalue.
        \If{$\hat{\theta}_{d_0 + 1} > \theta^*_{d_0 +1},$}
            \State $\texttt{counter} \gets \texttt{counter} + 1$
        \EndIf
    \EndFor

    \If{$c < \lfloor \alpha B \rfloor$}
        \State $\texttt{reject} \gets \textbf{False}$
    \EndIf
\EndWhile
\State \Return $d_0$
\end{algorithmic}
\end{algorithm}

In Figure~\ref{fig:bootstrap_analysis_ds} we analyze the performance of the bootstrap method on the simulated data of Section~\ref{sec:simulation} with the default values described in Table~\ref{tab:default_values_simulation} and in particular $d=4.$ We use $B=300$ bootstrap samples and vary the significance level $\alpha.$ We observe that for long time series the fraction of incorrect estimates is close to the nominal significance level $\alpha$. In this regime the dimension of the dynamic space $\mathcal{M}$ is typically overestimated. For shorter time series (see $T=100$) the fraction of misestimated dimensions is larger, the dimension typically being underestimated.

\begin{figure}
    \centering
    \includegraphics[scale=0.6]{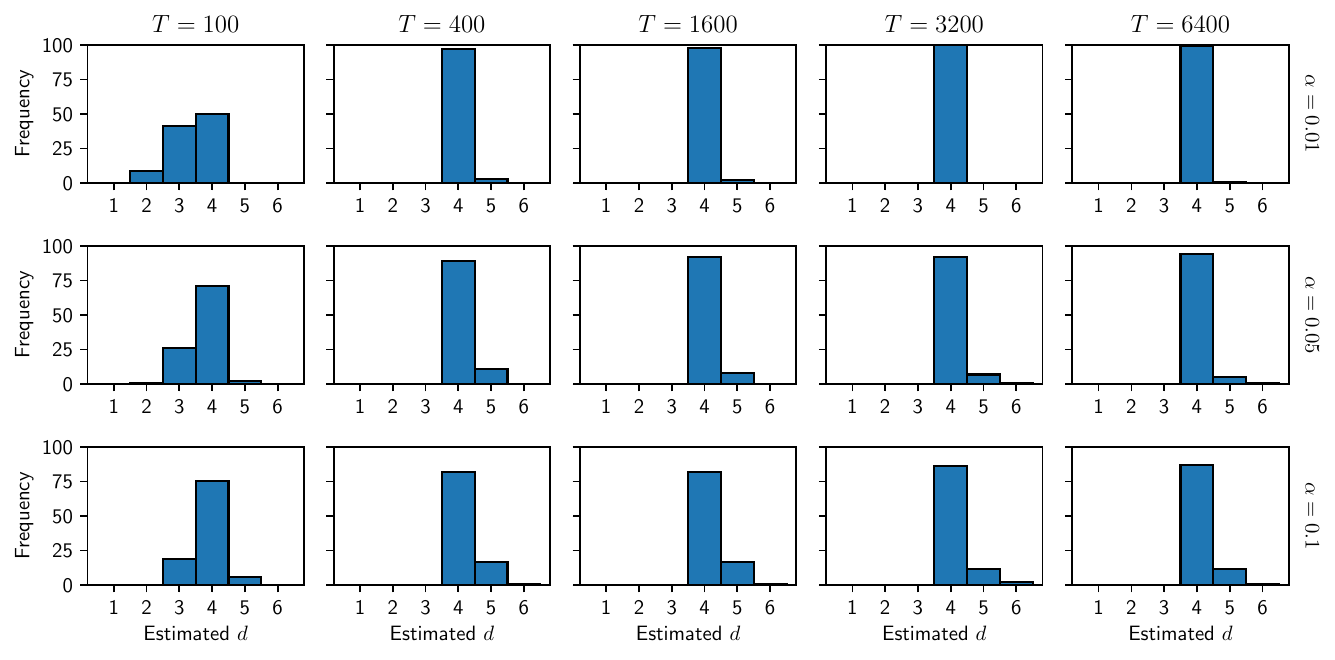}
    \caption{histogram (based on 100 independent simulations) of estimated dimension of the dynamic space using the bootstrap Algorithm~\ref{alg:bootstrap}, for varying significance level $\alpha$ and time series length $T.$ The true dimension in the simulated data is $d=4.$}
    \label{fig:bootstrap_analysis_ds}
\end{figure}

Figure~\ref{fig:bootstrap_analysis_combined} studies the effect of misestimation of $d$ on orthogonal and MSE-optimal denoising. In the left panel of Figure~\ref{fig:bootstrap_analysis_combined} we see the typical boxplots of the normalized MSE after denoising, based on 100 independently simulated time series, analogous to the panel with $p=2$ and $d=4$ in Figure~\ref{fig:dependence_d_p}. In the right panel we report the estimated value of $d$ for each individual time series. Recall that we use $B=300$ and $\alpha=0.05$ for the bootstrap procedure. We observe that almost all outliers are due to incorrect estimates of the dimension $d.$ We also observe that the effect of this on orthogonal and MSE-optimal denoising is comparable. Finally, we see that for short time series ($T=100$) the performance drop due to faulty estimation of $d$ is less pronounced. This suggests using a small significance level, e.g., $\alpha=0.01,$ which reduces the fraction of erroneous estimates for long time series. 

\begin{figure}
    \centering
    \includegraphics[scale=0.58]{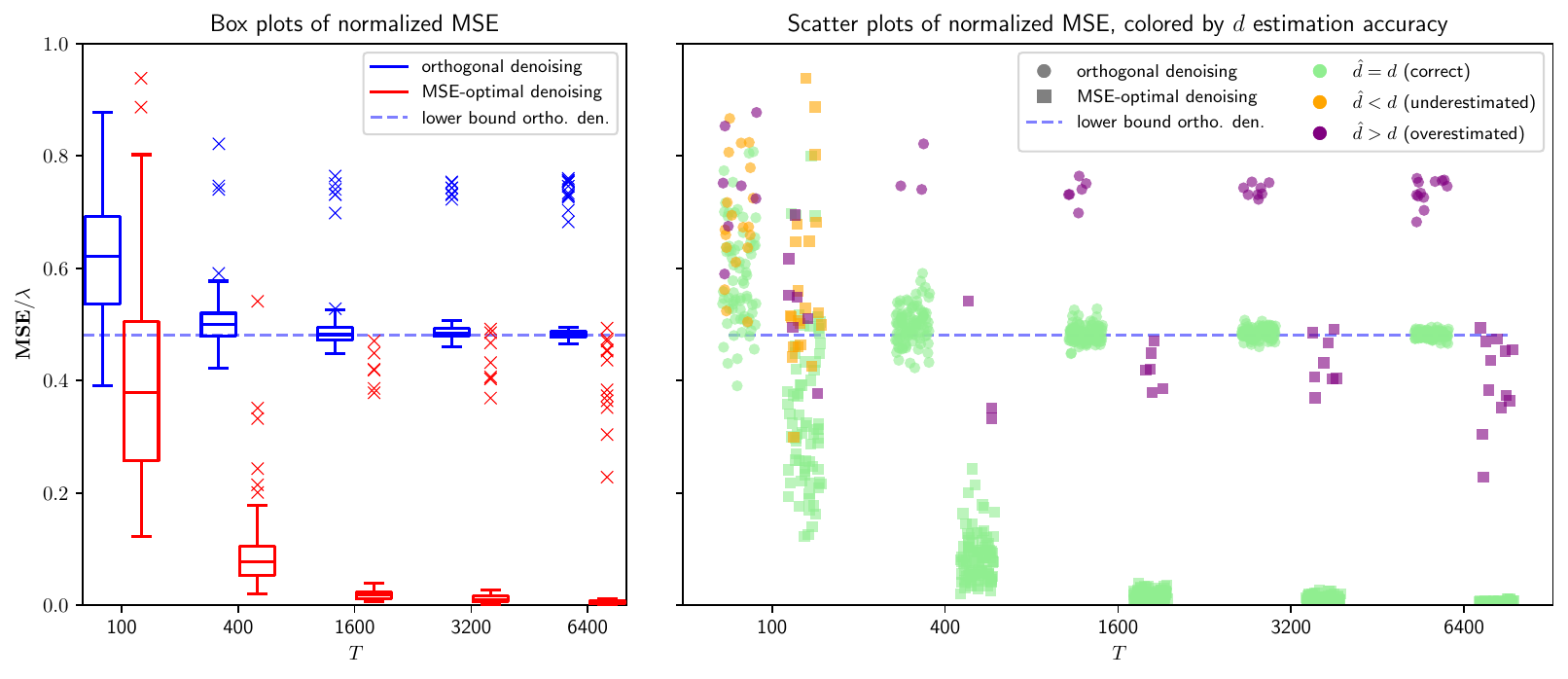}
    \caption{denoising performance, measured as $\mse/\lambda,$ for a VAR($p$) process with $d=4$ and $p=2.$ Each boxplot (left panel) consists of 100 independent simulations. In the right panel we specify the estimation of $d$ for each individual time series.}
    \label{fig:bootstrap_analysis_combined}
\end{figure}

\section{Asymptotic analysis} \label{app:asymptotics}

The purpose of this section is to establish the asymptotic behavior of optimal denoising, as specified in Theorem \ref{thm:asymptotics}. We start by reviewing a result on V-statistics, which underpins the asymptotic analysis of the estimator of $K.$ This analysis builds on on \citet{Bathia2010} and is adapted here to finite-dimensional time series.

\subsection{V-statistics and their asymptotic behavior}
Suppose $\bZt \in \mathcal{V}$ is a sequence of stationary random variables with marginal distribution function $P_Z(z)$ and $\mathcal{V}$ a finite-dimensional vector space. Given a function $\bphi: \mathcal{V}^m \to \mathcal{V}$ that is symmetric in each of its $m \geq 2$ arguments, we define the quantity
\begin{equation} \label{eq:theta_V_statistic}
\btheta(P_Z) = \int_{\mathcal{V}^m} \bphi(\bz_1, \ldots, \bz_m) \prod_{j=1}^m P_Z(\dd \bz_j),
\end{equation}
assuming that $P_Z(\cdot)$ is such that $\norm{\btheta(P_Z)} < \infty.$ Note that we integrate over the product measure, treating the arguments of $\bphi$ as independent copies of $\bZt$. Given a sequence $\bZ_1, \ldots, \bZ_T$ of $T$ observations, the V-statistic
\[
V_T = \frac{1}{T^m} \sum_{i_1=1}^T \cdots \sum_{i_m=1}^T \bphi(\bZ_{i_1}, \ldots, \bZ_{i_m})
\]
is a consistent estimator of $\btheta(P_Z)$.
The Hoeffding decomposition \citep{Vaart_1998} of $V_T$ takes the form
\[
V_T - \btheta(P_Z) = \sum_{c=1}^m \binom{m}{c} V_{Tc},
\]
where
\[
V_{Tc} = \int_{\mathcal{V}^c} \bphi_c(\bz_1, \ldots, \bz_c) \prod_{j=1}^c \left( P_T(\dd \bz_j) - P_Z(\dd \bz_j) \right), \qquad c=1,\ldots,m.
\]
In the definition of $V_{Tc}$ we used the empirical distribution
\[
P_T(\bZ) = \frac{1}{T} \left( \delta_{\bZ_1}(\bZ) + \ldots + \delta_{\bZ_T}(\bZ) \right),
\]
where we used Dirac measures $\delta_{\bZt}$, and we defined
\[
\bphi_c(\bz_1, \ldots, \bz_c) = \int_{\mathcal{V}^{m-c}} \bphi(\bz_1, \ldots, \bz_m) \prod_{j=c+1}^m P_Z(\dd \bz_j).
\]
Intuitively, $V_{Tc}$ measures the difference between the sample realization of a consistent estimator and its population counterpart. It converges in probability to the zero vector. The rate of convergence is given by the lemma below, for which we need an additional technical assumption.

\begin{assumption} \label{ass:integrable_phi}
    $\int_{\mathcal{V}^m} \norm{\bphi(\bz_1, \ldots, \bz_m)}^2 \prod_{j=1}^m P_Z(\dd \bz_j) < \infty.$
\end{assumption}

\begin{lemma} \label{lem:Sen}
    Consider a stationary sequence $\bZt \in \mathcal{V}$ obeying the same conditions as in Assumption \ref{ass:psi_mixing_y}. Introduce the $V$-statistics as done above and let Assumption \ref{ass:integrable_phi} hold. Then $\Ex{\norm{V_{Tc}}^2} = \bigO{T^{-c}}$ for $c=1,\ldots,m.$    
\end{lemma}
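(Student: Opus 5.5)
The plan is the standard argument for degenerate V-statistics under mixing, following \citet{Bathia2010} (who rely on Sen's classical results for $\psi$-mixing sequences), with care taken because $\bphi$ is vector-valued in a finite-dimensional $\mathcal{V}$. Since $\norm{V_{Tc}}^2$ is a finite sum of squared coordinates, it suffices to prove $\Ex{(V_{Tc})_\ell^2} = \bigO{T^{-c}}$ for each coordinate $\ell$. So I will treat a scalar kernel throughout. I will use Assumption \ref{ass:integrable_phi} together with Jensen's inequality to obtain $\int \phi_c^2 \, \dd P_Z^c < \infty$ for every $c$.

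First I expand the product measure $\prod_{j=1}^c (P_T - P_Z)(\dd \bz_j)$. This gives
\[
V_{Tc} = T^{-c} \sum_{i_1,\ldots,i_c=1}^T h_c(\bZ_{i_1},\ldots,\bZ_{i_c}),
\]
where $h_c$ is the canonical (completely degenerate) kernel obtained from $\phi_c$ by inclusion--exclusion. It is symmetric, and integrating any single argument against $P_Z$ gives zero. Squaring and taking expectations then yields
\[
\Ex{V_{Tc}^2} = T^{-2c} \sum_{i_1,\ldots,i_c}\sum_{j_1,\ldots,j_c} \Ex{h_c(\bZ_{i_1},\ldots)\,h_c(\bZ_{j_1},\ldots)}.
\]
The goal is to show that this $2c$-fold index sum is $\bigO{T^{c}}$.

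Second, for each tuple of $2c$ indices I order them and look at the gaps. If some index is separated from all the others by a gap $\ell$, I use the degeneracy of $h_c$: integrating that coordinate under independence gives zero. The $\psi$-mixing covariance inequality (for $\psi$-mixing, $|\Ex{fg} - \Ex{f}\Ex{g}| \le \psi(\ell)\,\Ex{|f|}\,\Ex{|g|}$, or its $L^2$ version with $\psi^{1/2}$) then bounds the expectation by $C\,\psi^{1/2}(\ell)$. The constant $C$ is controlled by second moments of $h_c$, using Assumption \ref{ass:integrable_phi}; the fourth moments in Assumption \ref{ass:psi_mixing_y} cover the quadratic kernels that are actually used for $\Khat$. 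Index tuples in which every index has a close partner contribute at most $\bigO{T^c}$ configurations of ``clusters''. Summing the isolated-index terms over gaps uses $\sum_\ell \ell\,\psi^{1/2}(\ell) < \infty$: the factor $\ell$ counts the placements of the remaining indices within the gap. So these terms also contribute $\bigO{T^c}$. Dividing by $T^{2c}$ gives $\bigO{T^{-c}}$.

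The main obstacle is the combinatorial bookkeeping in the second step. One must show that every configuration of $2c$ indices either pairs up, giving at most $c$ free indices, or has an isolated index whose decoupling cost is summable. One must also handle the fact that the mixing coupling replaces the joint law by a product of marginals of blocks rather than of single coordinates. I would handle this the way Sen and \citet{Bathia2010} do: induct on the largest gap, and split the ordered index set at that gap into two blocks. Degeneracy kills the decoupled term whenever one block contains an unpaired index. For the case $c=1$ the argument reduces to the familiar bound $\Var{\tfrac1T\sum_t h_1(\bZ_t)} = \bigO{T^{-1}}$, which follows directly from the summability of $\psi^{1/2}$. That case serves as a check before treating general $c$.
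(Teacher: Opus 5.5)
Your architecture (pass to the completely degenerate kernel $h_c$, expand $\Ex{\norm{V_{Tc}}^2}$ over $2c$-tuples of indices, decouple via $\psi$-mixing, count configurations) is the argument behind Lemma~3.3 of \citet{Sen1972}. That citation is all the paper offers as proof, so the route is the same. Two steps, however, fail as written.

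The first is moment control. Assumption~\ref{ass:integrable_phi} bounds $\bphi$ only under the product measure. For tuples of \emph{distinct} times you can transfer it through the domination $P_{(\bZ_{t_1},\ldots,\bZ_{t_k})}\le(1+\psi(1))^{k-1}P_Z^{\otimes k}$ for $t_1<\cdots<t_k$. $\psi$-mixing yields this on the full product $\sigma$-field, and the constant is finite because $\sum_\ell\ell\,\psi^{1/2}(\ell)<\infty$. You need to say this explicitly, since the covariance inequalities you quote are for products $f(\text{past})\,g(\text{future})$, not for the $2c$-variable function $h_c(\cdot)\,h_c(\cdot)$.

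A V-statistic also contains coincident indices, e.g.\ the terms $T^{-2c}\sum_t\Ex{\norm{h_c(\bZt,\ldots,\bZt)}^2}$, and Assumption~\ref{ass:integrable_phi} says nothing about $\bphi$ on the diagonal. In fact the lemma is false as stated (the paper's one-line citation does not address this). Take $Z_t$ i.i.d.\ $N(0,1)$, so $\psi\equiv 0$, with $m=2$ and $\phi(z_1,z_2)=f(z_1)f(z_2)$, $f(z)=e^{z^2/8}$. Then $\int\phi^2\dd{P_Z^{\otimes 2}}=(\Ex{f(Z)^2})^2=2$. But $V_{T2}=(\bar f_T-\Ex{f(Z)})^2$ with $\bar f_T=T^{-1}\sum_t f(Z_t)$, and $\Ex{V_{T2}^2}=\infty$ for every $T$ because $\Ex{f(Z)^4}=\infty$. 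An extra hypothesis is needed, such as $\int\norm{\bphi(\bz_{j_1},\ldots,\bz_{j_m})}^2\prod_{j=1}^m P_Z(\dd{\bz_j})<\infty$ for all $1\le j_1\le\cdots\le j_m\le m$.

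Your aside about the kernel behind $\Khat$ is correct, for a reason worth stating. The matrix $z_t(k)$ involves the two distinct times $t$ and $t+k$. Hence every expectation arising in $\Ex{\norm{V_{Tc}}^2}$, $c\le 2$, contains each $\by_u$ to at most the fourth power. The same domination applied to $\{\byt\}$ then bounds it by products of moments of order at most four. That does not rescue the lemma in the generality stated.

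The second problem is the decoupling step. ``Degeneracy kills the decoupled term whenever one block contains an unpaired index'' is true only for a singleton block. Inside a block of two or more times the variables are still dependent, so you cannot integrate one of them against $P_Z$ with the others held fixed. Take $c=2$ and $t_1<t_2<t_3<t_4$ with $t_3-t_2$ the largest gap. Consider the product $h_2(\bZ_{t_1},\bZ_{t_3})\,h_2(\bZ_{t_2},\bZ_{t_4})$ and split at that gap. Every index is unpaired, yet the decoupled term
\[
\int h_2(x_1,y_1)\,h_2(x_2,y_2)\,P_{(\bZ_{t_1},\bZ_{t_2})}(\dd{x_1},\dd{x_2})\,P_{(\bZ_{t_3},\bZ_{t_4})}(\dd{y_1},\dd{y_2})
\]
is nonzero in general. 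The repair is to write each two-point block law as $(1+r)$ times the product of its marginals, with $|r|$ at most $\psi$ of the within-block gap, and expand. Every term that keeps the factor $1$ for some block leaves an unpaired coordinate free and vanishes. So the decoupled term is $\bigO{\psi(t_2-t_1)\,\psi(t_4-t_3)}$, which sums to $\bigO{T^2}$.

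Bounds of this product type, one small factor per cluster of indices, are also what the counting needs beyond $c=2$. A single factor $\psi^{1/2}(\ell)$ for an isolated end index, with the other $2c-2$ gaps at most $\ell$, gives $T\sum_{\ell\le T}\ell^{2c-2}\psi^{1/2}(\ell)$. Under $\sum_\ell\ell\,\psi^{1/2}(\ell)<\infty$ this is $\bigO{T^c}$ when $c\le 2$, but not in general when $c\ge 3$ (take $\psi^{1/2}(\ell)=\ell^{-5/2}$).

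For the paper's use ($m=2$, hence $c\le 2$) your plan goes through after these two repairs. For general $c$, the bookkeeping you defer to Sen is precisely where the proof lives.
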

For a proof of Lemma \ref{lem:Sen}, we refer to Lemma 3.3 in \citet{Sen1972}.

\subsection{Asymptotics of optimal denoising}
The asymptotic behavior of optimal denoising, as stated in Theorem \ref{thm:asymptotics}, is expressed in terms of a $\bigOp{\cdot}$-property. Before proving Theorem \ref{thm:asymptotics}, we first define this notation and relate it to the asymptotic result of expected values from Lemma \ref{lem:Sen}. For a sequence of scalar-valued random variables $\mathcal{W}_T,$ where $T=1,2,\ldots,$ and a sequence of nonzero scalars $a_T,$ we write $\mathcal{W}_T = \bigOp{a_T}$ if for any $\varepsilon>0$ there exist $M>0$ and $T_0 \in \mathbb{N}_+$ such that
\[
\prob{\left| \frac{\mathcal{W}_T}{a_T} \right| \geq M} \leq \varepsilon
\]
for any $T \geq T_0.$ In that case, we say that the sequence of random variables $ \frac{\mathcal{W}_T}{a_T}$ is stochastically bounded.

\begin{lemma} \label{lem:Markov}
    For a sequence of scalar-valued random variables $\mathcal{W}_T,$ if $\Ex{\mathcal{W}_T^2} = \bigO{T^{-c}}$ for some $c>0,$ then $\mathcal{W}_T = \bigOp{T^{-c/2}}.$
\end{lemma}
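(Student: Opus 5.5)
This follows from Markov's inequality applied to the nonnegative random variable $\mathcal{W}_T^2$, together with the definition of $\bigO{\cdot}$ for deterministic sequences.

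First, I unpack the hypothesis. $\Ex{\mathcal{W}_T^2} = \bigO{T^{-c}}$ means there are constants $C>0$ and $T_1 \in \mathbb{N}_+$ such that $\Ex{\mathcal{W}_T^2} \leq C T^{-c}$ for all $T \geq T_1$. Take $a_T = T^{-c/2}$, which is nonzero, and fix an arbitrary $\varepsilon > 0$.

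Next, for any $M > 0$, the event $\{ |\mathcal{W}_T / a_T| \geq M \}$ coincides with $\{ \mathcal{W}_T^2 \geq M^2 T^{-c} \}$. Markov's inequality then gives, for $T \geq T_1$,
\[
\prob{\left| \frac{\mathcal{W}_T}{T^{-c/2}} \right| \geq M} \leq \frac{\Ex{\mathcal{W}_T^2}}{M^2 T^{-c}} \leq \frac{C}{M^2}.
\]

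Finally, choose $M = \sqrt{C/\varepsilon}$ and $T_0 = T_1$. The bound is then at most $\varepsilon$ for every $T \geq T_0$, which is exactly the stated definition of $\mathcal{W}_T = \bigOp{T^{-c/2}}$.

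There is no real obstacle in this argument. The only care needed is to convert the $\bigO{\cdot}$ statement into an explicit constant $C$ and threshold $T_1$ before applying Markov's inequality, so that $M$ can be chosen uniformly in $T$.
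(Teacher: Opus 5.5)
Your proposal is correct and matches the paper's proof essentially step for step. Both apply Markov's inequality to $\mathcal{W}_T^2$ and take the constant from the $\bigO{T^{-c}}$ bound to set $M=\sqrt{C/\varepsilon}$, uniformly for $T\geq T_0$.
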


\begin{proof}
    We use Markov's inequality. Choose an arbitrary $\varepsilon>0.$ Because of the assumption $\Ex{\mathcal{W}_T^2} = \bigO{T^{-c}}$ there exist $M'>0$ and $T_0 \in \mathbb{N}_+$ such that $\Ex{\mathcal{W}_T^2} \leq M' T^{-c}$ for all $T \geq T_0.$ Define $M = \sqrt{M'/\varepsilon.}$ We then have
    \begin{align*}
        \prob{\left| \frac{\mathcal{W}_T}{T^{-c/2}} \right| \geq M}
        & = \prob{\mathcal{W}_T^2 \geq M^2 T^{-c}} \\
        & \leq \frac{\Ex{\mathcal{W}_T^2}}{M^2 T^{-c}} \\
        & \leq \frac{M'}{M^2} \\
        & = \varepsilon,
    \end{align*}
    where the first inequality is Markov's inequality. Since $\varepsilon>0$ was chosen arbitrarily, this proves that $\mathcal{W}_T = \bigOp{T^{-c/2}}.$
\end{proof}

We are now in a position to prove Theorem \ref{thm:asymptotics}, which is the main result on the asymptotic behavior of optimal denoising in this paper.

\begin{proof}[Proof of Theorem \ref{thm:asymptotics}.]
    For simplicity, we assume that the dynamic and noise spaces do not have a common subspace ($\mathcal{M} \cap \mathcal{M}_\varepsilon = \emptyset$). Extending the proof to the case $\mathcal{M} \cap \mathcal{M}_\varepsilon \neq \emptyset$ is straightforward. Throughout this proof, we work with $V V^\top$ instead of $U U^\top.$ As argued in Section \ref{sec:estimation}, they are equal, but only for $V$ can we define an estimator. This makes $V$ suitable for asymptotic analysis, contrary to $U.$

    We know from Theorem \ref{thm:remaining_noise} that optimal denoising removes all noise if $\mathcal{M} \cap \mathcal{M}_\varepsilon = \empty.$ One way of seeing this is by means of the following decomposition,
    \begin{align}
    \normsq{\bxtopt - \bxt}
    & = \normsq{(\Popt - I_n) \bxt + \Popt \bepst} \nonumber \\
    & \leq \normsq{(\Popt - I_n) \bxt} + \normsq{ \Popt \bepst} \nonumber \\
    & \leq \normsq{(VV^\top - I_n) \bxt} + \normsq{ V V^\top \Sigmay W_\perp (\Sigmachi)^{-1}} \normsq{W_\perp^\top \bxt} \label{eq:error_decomposition} \\
    & \quad + \normsq{V} \normsq{ V^\top \bepstperp} + \normsq{ V V^\top \Sigmay W_\perp (\Sigmachi)^{-1}} \normsq{W_\perp^\top \bepstpara} \nonumber \\
    & \quad + \normsq{V V^\top \bepstpara - V V^\top \Sigmay W_\perp (\Sigmachi)^{-1} W_\perp^\top \bepstperp} , \nonumber
    \end{align}
    where we used the triangle inequality and the fact that operator norms are submultiplicative. In the last expression, the rightmost norm in each term is exactly zero, making $\normsq{\bxtopt - \bxt} = 0$ (see the proof of Theorem \ref{thm:remaining_noise} for details). At the sample level, given a time series of length $T,$ each matrix in the above display is replaced by its sample estimate $\Vhat, \Sigmayhat, \Wperphat$ and $\Sigmachihat.$ This makes $\normsq{\bxthatopt - \bxt}$ not exactly zero. We are interested in its behavior as $T\to\infty$ at fixed, finite $n, d$ and $d_\varepsilon.$

    We start by focusing on the first term $\normsq{(\Vhat \Vhat^\top - I_n) \bxt}.$ The columns of $\Vhat$ are the orthonormal eigenvectors of $\Khat$ associated with non-zero eigenvalues. The asymptotics of $\Khat$ have been derived by \citet{Lam2011}. Here, we outline a different derivation based on V-statistics, completely analogous to the derivation for functional time series in \citet{Bathia2010}. Because $\normsq{\Khat - K} \leq \sum_{k=1}^{k_0} c_k^2 \normsq{\Sigmaylaggedhat{k} \Sigmaylaggedhat{k}^\top - \Sigmaylagged{k} \Sigmaylagged{k}^\top},$ we focus on the asymptotics of the $k$-th term in this sum. The statistic
    \[
    \Sigmaylaggedtilde{k} = \frac{1}{T - k} \sum_{t=1}^{T-k} (\by_{t+k} - \bmu) (\byt - \bmu)^\top,
    \]
    where $\bmu = \Ex{\byt},$ is unbiased for $\Sigmaylagged{k},$ in contrast to $\Sigmaylaggedhat{k}.$ Using that $\Sigmaylaggedhat{k} = \Sigmaylaggedtilde{k} - (\bmu - \bybar)(\bmu - \bybar)^\top + \bigOp{T^{-1}}$ for fixed $k,$ we obtain
    \begin{align*}
        \normsq{\Sigmaylaggedhat{k} \Sigmaylaggedhat{k}^\top - \Sigmaylagged{k} \Sigmaylagged{k}^\top} 
        & \leq \normsq{\Sigmaylaggedtilde{k} \Sigmaylaggedtilde{k}^\top - \Sigmaylagged{k} \Sigmaylagged{k}^\top} \\
        & \quad + 2 \normsq{\Sigmaylaggedtilde{k}} \norm{\bmu - \bybar}^4 + \norm{\bmu - \bybar}^8,
    \end{align*}
    where we used the triangle inequality, the fact that operator norms are sub-multiplicative and that the matrix norm of an outer product equals the product of the respective vector norms. The central limit theorem for stationary processes \citep{Anderson1971} implies that $\norm{\bmu - \bybar} = \bigOp{T^{-1/2}},$ making, as we will see, the second and third term in the above expression asymptotically negligible.

    If we define the function $\rho: \mathbb{R}^{n\times n} \times \mathbb{R}^{n\times n} \to \mathbb{R}^{n\times n}$ as $\rho(A, B) = \frac{1}{2}(AB^\top + BA^\top),$ we have that
    \[
    \Sigmaylaggedhat{k} \Sigmaylaggedhat{k}^\top = \frac{1}{(T-k)^2} \sum_{t=1}^{T-k} \sum_{t'=1}^{T-k} \rho (z_t(k), z_{t'}(k)),
    \]
    where $z_t(k) = (\by_{t+k} - \bmu) (\byt - \bmu)^\top.$ This has the form of a V-statistic with $m=2.$ If we define $V_T = \Sigmaylaggedhat{k} \Sigmaylaggedhat{k}^\top,$ then $\theta(P_Z) = \Sigmaylagged{k} \Sigmaylagged{k}^\top$ is its expectation with respect to the product measure as in Equation \eqref{eq:theta_V_statistic}. Since $V_T - \theta(P_Z) = 2 V_{T1} + V_{T2},$ by means of Lemma \ref{lem:Sen} we find that 
    \[
    \Ex{\normsq{\Sigmaylaggedtilde{k} \Sigmaylaggedtilde{k}^\top - \Sigmaylagged{k} \Sigmaylagged{k}^\top}} = \bigO{T^{-1}}.
    \]
    By Lemma \ref{lem:Markov}, we thus find that $\normsq{\Sigmaylaggedhat{k} \Sigmaylaggedhat{k}^\top - \Sigmaylagged{k} \Sigmaylagged{k}^\top} = \bigOp{T^{-1}}$ and therefore $\normsq{\Khat - K} = \bigOp{T^{-1}}.$ 
    
    Assuming that the non-zero eigenvalues of $K$ are all distinct, the distance between the respective eigenvectors is bounded by $\norm{\Khat - K}$ (see theorem 2.7 in \citet{Horvath2012}). As a consequence, $\normsq{\Vhat - V} = \bigOp{T^{-1}}.$ Defining $\delta V = \Vhat - V,$ we get by expanding in $\delta V$
    \begin{align}
        \normsq{(\Vhat \Vhat^\top - I_n) \bxt}
        & = \normsq{(V V^\top - I_n) \bxt + (V \delta V^\top + \delta V V^\top + \delta V \delta V^\top) \bxt} \label{eq:asymptotic_expansion} \\
        & \leq \normsq{(V V^\top - I_n) \bxt} + 2 \normsq{V} \normsq{\delta V} \normsq{\bxt} + \norm{\delta V}^4 \normsq{\bxt} \nonumber \\
        & = 0 + \bigOp{T^{-1}} + \bigOp{T^{-2}} \nonumber \\
        & = \bigOp{T^{-1}} . \nonumber
    \end{align}
    Going back to Equation \eqref{eq:error_decomposition}, with analogous arguments we obtain for the sample version of the third term that $\normsq{\Vhat} \normsq{ \Vhat^\top \bepstperp} = \bigOp{T^{-1}}$ as well.

    For the other terms in Equation \eqref{eq:error_decomposition}, we need to know the asymptotic behavior of $\Sigmayhat, \Wperphat$ and $\Sigmachihat.$ It is well-known that $\normsq{\Sigmayhat} = \bigOp{T^{-1}}$ \citep{Brockwell1991}. Recall that the columns of $\Wperphat$ are the orthonormal eigenvectors of $\Sigmaepsperphat = (I_n - \Vhat \Vhat^\top) \Sigmayhat (I_n - \Vhat \Vhat^\top)$ associated with the $d_\perp$ largest eigenvalues. Analogously to Equation \eqref{eq:asymptotic_expansion}, we can expand $\Sigmaepsperphat$ in terms of $\delta V$ and $\Sigmayhat - \Sigmay,$ use their asymptotics and the triangle inequality, to find that $\normsq{\Sigmaepsperphat - \Sigmaepsperp} = \bigOp{T^{-1}}.$ We again invoke Theorem 2.7 of \citet{Horvath2012} and find $\normsq{\Wperphat} = \bigOp{T^{-1}}.$ The matrix $\Sigmachihat$ is diagonal, with on its diagonal the $d_\perp$ largest eigenvalues of $\Sigmaepsperphat.$ The Bauer-Fike theorem implies that the gap between the respective eigenvalues of $\Sigmaepsperphat$ and $\Sigmaepsperp$ is bounded by $\norm{\Sigmaepsperphat - \Sigmaepsperp}$ and hence $\normsq{\Sigmachihat - \Sigmachi} = \bigOp{T^{-1}}.$ Proceeding as in Equation \eqref{eq:asymptotic_expansion}, by expanding the remaining terms in the sample version of Equation \eqref{eq:error_decomposition} we find that all terms have the same leading-order asymptotic behavior of $\bigOp{T^{-1}}.$ This concludes the proof.    
\end{proof}

\section{Simulation - details} \label{app:simulation_details}
The procedure that randomly generates VAR($p$) processes as defined in Equation \eqref{eq:VAR_process} is given by Algorithm \ref{alg:VAR_process_generation}. For technical details about VAR($p$) processes, see, e.g., \cite{Lutkepohl2005}. We require that the process is stationary by imposing that the largest modulus of the roots of the characteristic polynomial of this VAR($p$) process $\text{r}_{\text{max}}(A_1, \ldots, A_s)$ is below $r_2<1,$ where $r_2$ is set by the practitioner. We also ensure that $\Sigmaxi$ is diagonal and that $\Tr{\Sigmaxi}=1.$ Furthermore, we require that $\Tr{\Sigma_{\mathrm{\bf e}}}$ is  within a specified range $[s_1,s_2],$ with $s_1, s_2 \in (0,1)$, which facilitates comparison of the asymptotic behavior across VAR($p$) processes.

The algorithm consists of four steps. First, a positive definite $\Sigma_{\mathrm{\bf e}}$ is randomly sampled. Then matrices $A_1, A_2, \ldots, A_p$ are randomly sampled and rescaled such that they constitute a stationary VAR($p$) process. Next, the corresponding $\Sigmaxi$ is diagonalized and $\Sigma_{\mathrm{\bf e}}$ is rescaled such that $\Tr{\Sigmaxi}=1.$ If the resulting $\Tr{\Sigma_{\mathrm{\bf e}}} \notin [s_1,s_2],$ the whole procedure is repeated. Otherwise, the procedure is terminated and the output is a VAR($p$) process defined by $\Sigma_{\mathrm{\bf e}}$ and $A_s$ for $s=1,2,\ldots,p.$

Throughout the simulations in this paper, unless stated otherwise, we use $r_1 = 0.01,$ $r_2 = 0.98,$ $s_1 = 0.05,$ and $s_2 = 0.25.$ The most important parameters here are $r_2$ and $s_2.$ Obviously, $r_2 < 1$ to ensure stationarity. If we choose $r_2$ too small, Algorithm \ref{alg:VAR_process_generation} has difficulty finding a VAR($p$) process satisfying all conditions when both $d$ and $p$ are large (e.g., $d=6,$ $p=5$). If we choose $s_2$ too small, we encounter a similar problem. If we choose $s_2$ too large, the convergence of the denoising procedures as a function of the time series length $T$ is slower.

\begin{algorithm}
\caption{Random generation of VAR($p$) process.}\label{alg:VAR_process_generation}
\begin{algorithmic}[1]
\Require $d \geq 1;$ $p \geq 1;$ $r_1, r_2, s_1, s_2 \in (0,1)$ such that $r_1 < r_2$ and $s_1 < s_2.$
\While{$\Tr{\Sigma_{\mathrm{\bf e}}} \notin [s_1,s_2]$}

    \State 
    \While{$\Sigma_{\mathrm{\bf e}}$ is not positive definite} \Comment{Sample $\Sigma_{\mathrm{\bf e}}.$}
        \State $\Sigma_{\mathrm{\bf e}} \gets \frac{1}{10} M M^\top,$ where $M \thicksim \mathcal{N} ({\bf 0}, I_d)$ is randomly sampled.
    \EndWhile
    
    \State
    \State Sample $A_s \thicksim \mathcal{N} ({\bf 0}, I_d)$ for $s=1,2,\ldots,p.$ \Comment{Sample $A_s.$}

    \State
    \State $b \gets (\text{r}_{\text{max}}(A_1, \ldots, A_s) > r_2)$ \Comment{Find stationary process.}
    \State $f \gets 2$
    \While{$\text{r}_{\text{max}}(A_1, \ldots, A_s) \notin [r_1, r_2]$}
        \If{$\text{r}_{\text{max}}(A_1, \ldots, A_s) > r_2$}
            \If{$b=0$}
                \State $b \gets 1$
                \State $f \gets f + 1$
            \EndIf
            \State $A_s \gets (1 - 1/f)A_s$ for $s=1,2,\ldots,p.$
        \Else
            \If{$b=1$}
                \State $b \gets 0$
                \State $f \gets f + 1$
            \EndIf
            \State $A_s \gets (1 + 1/f)A_s$ for $s=1,2,\ldots,p.$
        \EndIf
    \EndWhile
    
    \State
    \State Find $S$ that diagonalizes $\Sigmaxi.$ \Comment{Diagonalize $\Sigmaxi.$}
    \State $A_s \gets S^\top A_s S$ for $s=1,2,\ldots,p.$
    \State $\Sigma_{\mathrm{\bf e}} \gets S^\top \Sigma_{\mathrm{\bf e}} S.$
    
    \State
    \State Rescale $\Sigma_{\mathrm{\bf e}} \gets \Sigma_{\mathrm{\bf e}} / \Tr{\Sigmaxi}$. \Comment{Impose $\Tr{\Sigmaxi}=1.$}
    \State
\EndWhile
\State \Return $\Sigma_{\mathrm{\bf e}}$ and $A_s$ for $s=1,2,\ldots,p.$
\end{algorithmic}
\end{algorithm}

\newpage
\section{Additional simulation results} \label{app:additional_simulation_results}

\begin{figure}[H]
    \centering
    \includegraphics[scale=0.6]{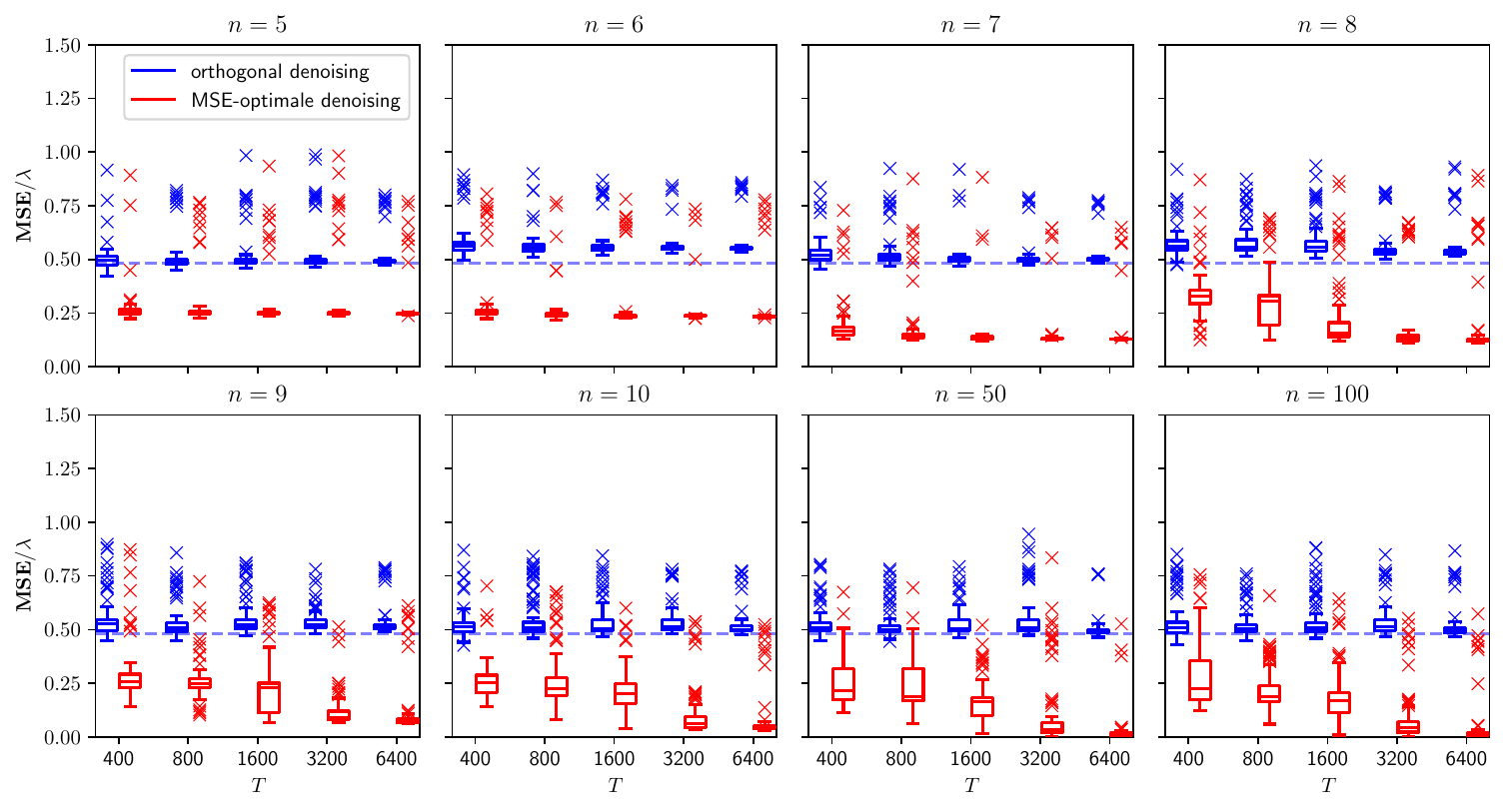}
    \caption{(asymptotic) denoising performance, measured as $\mse/\lambda,$ as a function of the dimension $n$ of the observed time series $\byt.$ Each boxplot consists of 100 independent simulations.}
    \label{fig:dependence_n}
\end{figure}

\begin{figure}[H]
    \centering
    \includegraphics[scale=0.6]{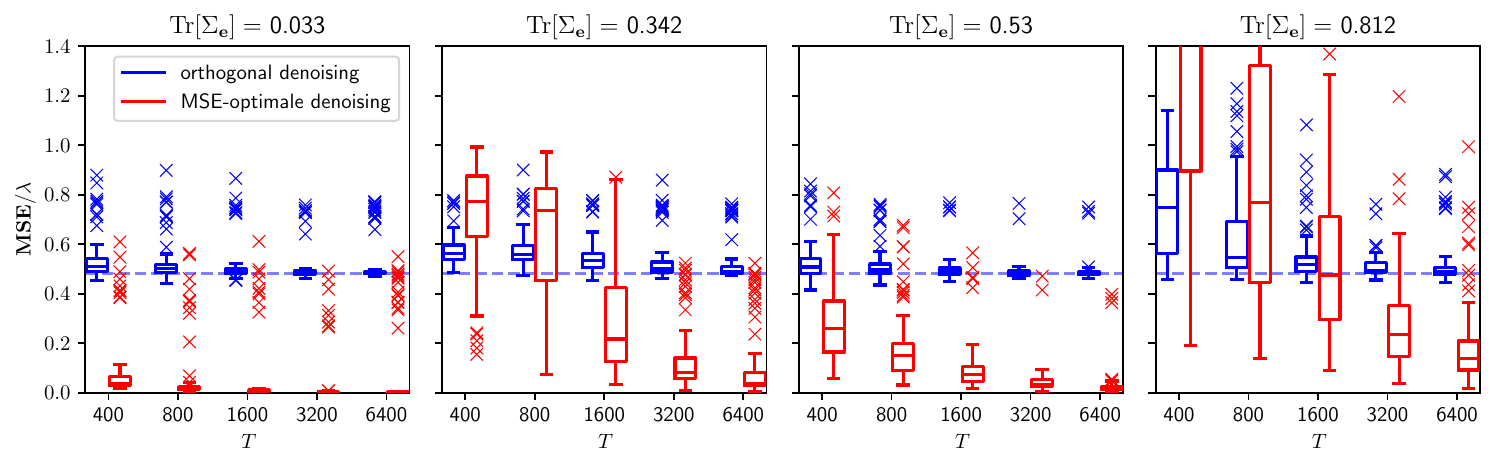}
    \caption{(asymptotic) denoising performance, measured as $\mse/\lambda,$ as a function of the total variance of the VAR($p$) process, $\Tr{\Sigma_{\rm \bf e}}.$  Each boxplot consists of 100 independent simulations.}
    \label{fig:dependence_trace_sigma_e}
\end{figure}

\begin{figure}
    \centering
    \includegraphics[scale=0.6]{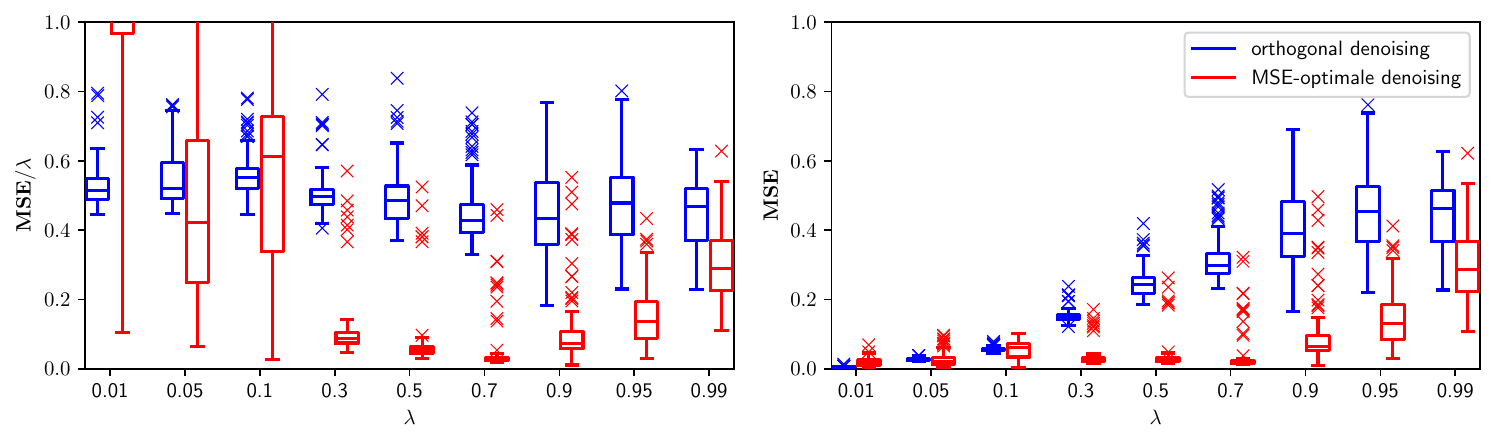}
    \caption{denoising performance for time series of (default) length $T=400,$ measured as $\mse/\lambda$ (left panel) and as absolute $\mse$ (right panel), as a function of the noise fraction $\lambda$ of the observed time series $\{\byt\}$. Each boxplot consists of 100 independent simulations.}
    \label{fig:dependence_lambda}
\end{figure}

\begin{figure}
    \centering
    \includegraphics[scale=0.6]{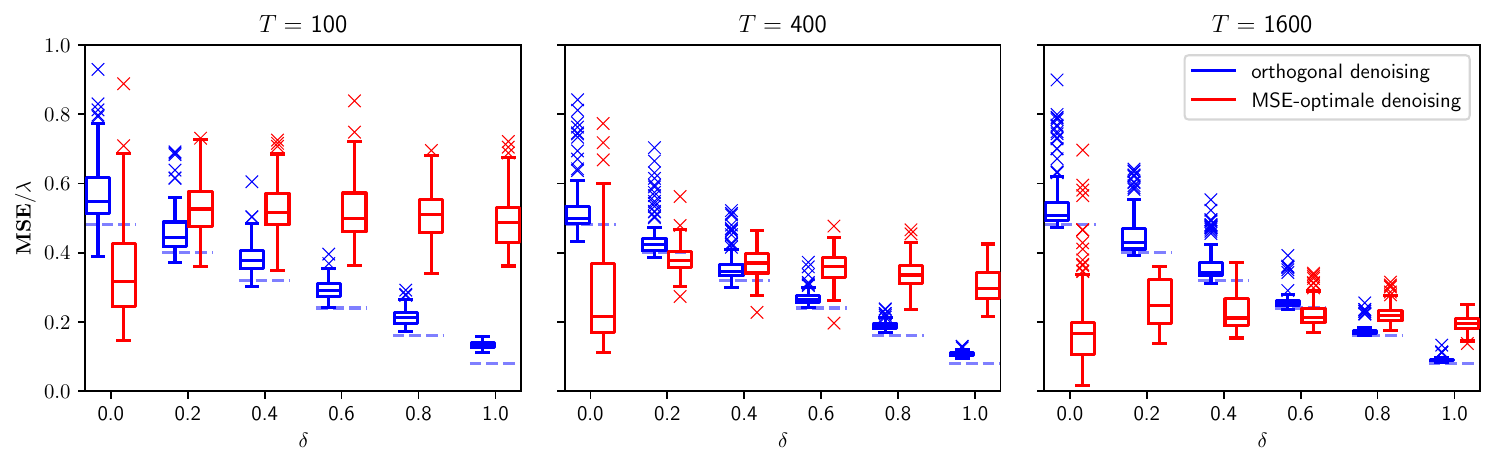}
    \caption{(asymptotic) denoising performance, measured as $\mse/\lambda,$ as a function of the fraction $\delta$ of unstructured noise $\bepstnull$ within the total noise $\bepst + \bepstnull,$ while keeping the fraction of the latter within the observed time series $\{\byt\}$ constant at the (default) value $\lambda=0.2.$ Each boxplot consists of 100 independent simulations.}
    \label{fig:dependence_delta}
\end{figure}

\begin{figure}
    \centering
    \includegraphics[scale=0.6]{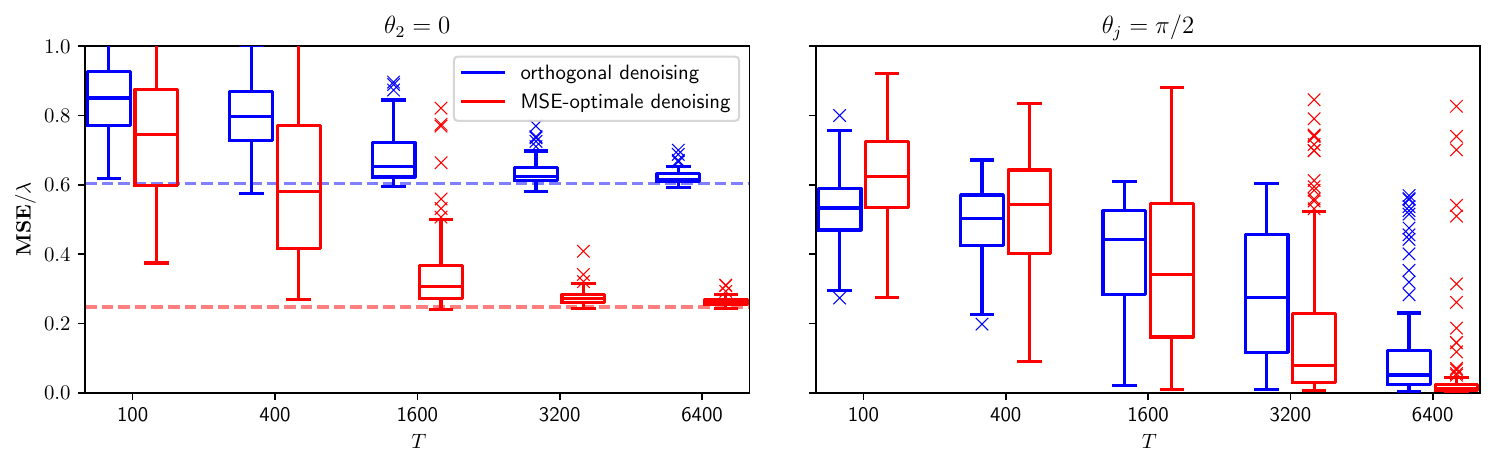}
    \caption{(asymptotic) denoising performance, measured as $\mse/\lambda,$ as a function of the orientation of the noise space with respect to the dynamic space. In the left panel the second noise component is parallel ($\theta_2=0$) to the dynamic space. In the right panel the full noise space is in the orthogonal complement of the dynamic space ($\theta_j=\pi/2$).}
    \label{fig:dependence_thetas}
\end{figure}


\newpage
\bibliographystyle{apalike}
\bibliography{refs.bib}

@book{Anderson1971,
  title = {The {{Statistical Analysis}} of {{Time Series}}},
  author = {Anderson, Theodore W.},
  year = {1971},
  series = {Wiley {{Series}} in {{Probability}} and {{Statistics}}},
  edition = {1st edition},
  publisher = {John Wiley},
  address = {New York},
  isbn = {978-0-471-04745-2}
}

@article{Bai2002,
  title = {Determining the {{Number}} of {{Factors}} in {{Approximate Factor Models}}},
  author = {Bai, Jushan and Ng, Serena},
  year = {2002},
  journal = {Econometrica},
  volume = {70},
  number = {1},
  pages = {191--221},
  issn = {1468-0262},
  doi = {10.1111/1468-0262.00273},
  copyright = {The Econometric Society 2001}
}

@article{Bai2007,
  title = {Determining the {{Number}} of {{Primitive Shocks}} in {{Factor Models}}},
  author = {Bai, Jushan and Ng, Serena},
  year = {2007},
  journal = {Journal of Business \& Economic Statistics},
  volume = {25},
  number = {1},
  pages = {52--60},
  publisher = {ASA Website},
  issn = {0735-0015},
  doi = {10.1198/073500106000000413}
}

@article{Bathia2010,
  title = {Identifying the Finite Dimensionality of Curve Time Series},
  author = {Bathia, Neil and Yao, Qiwei and Ziegelmann, Flavio},
  year = {2010},
  journal = {The Annals of Statistics},
  volume = {38},
  number = {6},
  eprint = {1211.2522},
  primaryclass = {math},
  issn = {0090-5364},
  doi = {10.1214/10-AOS819},
  archiveprefix = {arXiv}
}

@book{Brockwell1991,
  title = {Time {{Series}}: {{Theory}} and {{Methods}}},
  shorttitle = {Time {{Series}}},
  author = {Brockwell, Peter J. and Davis, Richard A.},
  year = {1991},
  series = {Springer {{Series}} in {{Statistics}}},
  publisher = {Springer},
  address = {New York, NY},
  doi = {10.1007/978-1-4419-0320-4},
  copyright = {http://www.springer.com/tdm},
  isbn = {978-1-4419-0319-8 978-1-4419-0320-4}
}

@article{Candes2007,
  title = {The {{Dantzig}} Selector: {{Statistical}} Estimation When p Is Much Larger than n},
  shorttitle = {The {{Dantzig}} Selector},
  author = {Candes, Emmanuel and Tao, Terence},
  year = {2007},
  journal = {The Annals of Statistics},
  volume = {35},
  number = {6},
  issn = {0090-5364},
  doi = {10.1214/009053606000001523}
}

@article{Chen2022,
  title = {Functional {{Linear Regression}}: {{Dependence}} and {{Error Contamination}}},
  shorttitle = {Functional {{Linear Regression}}},
  author = {Chen, Cheng and Guo, Shaojun and Qiao, Xinghao},
  year = {2022},
  journal = {Journal of Business \& Economic Statistics},
  volume = {40},
  number = {1},
  pages = {444--457},
  issn = {0735-0015, 1537-2707},
  doi = {10.1080/07350015.2020.1832503}
}

@article{Donoho1994,
  title = {Ideal Spatial Adaptation by Wavelet Shrinkage},
  author = {Donoho, David L and Johnstone, Iain M},
  year = {1994},
  journal = {Biometrika},
  volume = {81},
  number = {3},
  pages = {425--455},
  issn = {0006-3444},
  doi = {10.1093/biomet/81.3.425}
}

@article{Donoho1995,
  title = {Adapting to {{Unknown Smoothness}} via {{Wavelet Shrinkage}}},
  author = {Donoho, David L. and Johnstone, Iain M.},
  year = {1995},
  journal = {Journal of the American Statistical Association},
  volume = {90},
  number = {432},
  pages = {1200--1224},
  publisher = {Taylor \& Francis},
  issn = {0162-1459},
  doi = {10.1080/01621459.1995.10476626}
}

@article{Fan2011,
  title = {High-Dimensional Covariance Matrix Estimation in Approximate Factor Models},
  author = {Fan, Jianqing and Liao, Yuan and Mincheva, Martina},
  year = {2011},
  journal = {The Annals of Statistics},
  volume = {39},
  number = {6},
  pages = {3320--3356},
  publisher = {Institute of Mathematical Statistics},
  issn = {0090-5364, 2168-8966},
  doi = {10.1214/11-AOS944}
}

@article{Fan2013,
  title = {Large {{Covariance Estimation}} by {{Thresholding Principal Orthogonal Complements}}},
  author = {Fan, Jianqing and Liao, Yuan and Mincheva, Martina},
  year = {2013},
  journal = {Journal of the Royal Statistical Society Series B: Statistical Methodology},
  volume = {75},
  number = {4},
  pages = {603--680},
  issn = {1369-7412},
  doi = {10.1111/rssb.12016}
}

@article{Forni2000,
  title = {The {{Generalized Dynamic-Factor Model}}: {{Identification}} and {{Estimation}}},
  shorttitle = {The {{Generalized Dynamic-Factor Model}}},
  author = {Forni, Mario and Hallin, Marc and Lippi, Marco and Reichlin, Lucrezia},
  year = {2000},
  journal = {The Review of Economics and Statistics},
  volume = {82},
  number = {4},
  pages = {540--554},
  issn = {0034-6535},
  doi = {10.1162/003465300559037}
}

@article{Forni2005,
  title = {The {{Generalized Dynamic Factor Model}}: {{One-Sided Estimation}} and {{Forecasting}}},
  shorttitle = {The {{Generalized Dynamic Factor Model}}},
  author = {Forni, Mario and Hallin, Marc and Lippi, Marco and Reichlin, Lucrezia},
  year = {2005},
  journal = {Journal of the American Statistical Association},
  volume = {100},
  number = {471},
  pages = {830--840},
  publisher = {Taylor \& Francis},
  issn = {0162-1459},
  doi = {10.1198/016214504000002050}
}

@article{Hallin2007,
  title = {Determining the {{Number}} of {{Factors}} in the {{General Dynamic Factor Model}}},
  author = {Hallin, Marc and Li{\v s}ka, Roman},
  year = {2007},
  journal = {Journal of the American Statistical Association},
  volume = {102},
  number = {478},
  pages = {603--617},
  publisher = {ASA Website},
  issn = {0162-1459},
  doi = {10.1198/016214506000001275}
}

@book{Horvath2012,
  title = {Inference for {{Functional Data}} with {{Applications}}},
  author = {Horv{\'a}th, Lajos and Kokoszka, Piotr},
  year = {2012},
  series = {Springer {{Series}} in {{Statistics}}},
  volume = {200},
  publisher = {Springer},
  address = {New York, NY},
  doi = {10.1007/978-1-4614-3655-3},
  copyright = {https://www.springernature.com/gp/researchers/text-and-data-mining},
  isbn = {978-1-4614-3654-6 978-1-4614-3655-3}
}

@article{Hyndman2007,
  title = {Robust Forecasting of Mortality and Fertility Rates: {{A}} Functional Data Approach},
  shorttitle = {Robust Forecasting of Mortality and Fertility Rates},
  author = {Hyndman, Rob J. and Shahid Ullah, {\relax Md}.},
  year = {2007},
  journal = {Computational Statistics \& Data Analysis},
  volume = {51},
  number = {10},
  pages = {4942--4956},
  issn = {01679473},
  doi = {10.1016/j.csda.2006.07.028},
  copyright = {https://www.elsevier.com/tdm/userlicense/1.0/}
}

@article{Johnstone2009,
  title = {On {{Consistency}} and {{Sparsity}} for {{Principal Components Analysis}} in {{High Dimensions}}},
  author = {Johnstone, Iain M. and Lu, Arthur Yu},
  year = {2009},
  journal = {Journal of the American Statistical Association},
  volume = {104},
  number = {486},
  pages = {682--693},
  publisher = {Taylor \& Francis},
  issn = {0162-1459},
  doi = {10.1198/jasa.2009.0121}
}

@article{Lam2011,
  title = {Estimation of Latent Factors for High-Dimensional Time Series},
  author = {Lam, C. and Yao, Q. and Bathia, N.},
  year = {2011},
  journal = {Biometrika},
  volume = {98},
  number = {4},
  pages = {901--918},
  issn = {0006-3444, 1464-3510},
  doi = {10.1093/biomet/asr048}
}

@article{Lam2012,
  title = {Factor Modeling for High-Dimensional Time Series: {{Inference}} for the Number of Factors},
  shorttitle = {Factor Modeling for High-Dimensional Time Series},
  author = {Lam, Clifford and Yao, Qiwei},
  year = {2012},
  journal = {The Annals of Statistics},
  volume = {40},
  number = {2},
  issn = {0090-5364},
  doi = {10.1214/12-AOS970}
}

@article{Ledoit2004,
  title = {A Well-Conditioned Estimator for Large-Dimensional Covariance Matrices},
  author = {Ledoit, Olivier and Wolf, Michael},
  year = {2004},
  journal = {Journal of Multivariate Analysis},
  volume = {88},
  number = {2},
  pages = {365--411},
  issn = {0047-259X},
  doi = {10.1016/S0047-259X(03)00096-4}
}

@book{Lutkepohl2005,
  title = {New {{Introduction}} to {{Multiple Time Series Analysis}}},
  author = {L{\"u}tkepohl, Helmut},
  year = {2005},
  publisher = {Springer},
  address = {Berlin, Heidelberg},
  doi = {10.1007/978-3-540-27752-1},
  copyright = {http://www.springer.com/tdm},
  isbn = {978-3-540-40172-8 978-3-540-27752-1}
}

@misc{Otto2024,
  title = {Approximate {{Factor Models}} for {{Functional Time Series}}},
  author = {Otto, Sven and Salish, Nazarii},
  year = {2024},
  number = {arXiv:2201.02532},
  eprint = {2201.02532},
  primaryclass = {econ},
  publisher = {arXiv},
  doi = {10.48550/arXiv.2201.02532},
  archiveprefix = {arXiv}
}

@misc{Otto2025,
  title = {Functional {{Factor Regression}} with an {{Application}} to {{Electricity Price Curve Modeling}}},
  author = {Otto, Sven and Winter, Luis},
  year = {2025},
  number = {arXiv:2503.12611},
  eprint = {2503.12611},
  primaryclass = {econ},
  publisher = {arXiv},
  doi = {10.48550/arXiv.2503.12611},
  archiveprefix = {arXiv}
}

@article{Pan2008,
  title = {Modelling Multiple Time Series via Common Factors},
  author = {Pan, J. and Yao, Q.},
  year = {2008},
  journal = {Biometrika},
  volume = {95},
  number = {2},
  pages = {365--379},
  issn = {0006-3444, 1464-3510},
  doi = {10.1093/biomet/asn009}
}

@article{Sen1972,
  title = {Limiting Behavior of Regular Functionals of Empirical Distributions for Stationary *-Mixing Processes},
  author = {Sen, Pranab Kumar},
  year = {1972},
  journal = {Zeitschrift f{\"u}r Wahrscheinlichkeitstheorie und Verwandte Gebiete},
  volume = {25},
  number = {1},
  pages = {71--82},
  issn = {0044-3719, 1432-2064},
  doi = {10.1007/BF00533337},
  copyright = {http://www.springer.com/tdm}
}

@article{Stock2002,
  title = {Macroeconomic {{Forecasting Using Diffusion Indexes}}},
  author = {Stock, James H and Watson, Mark W},
  year = {2002},
  journal = {Journal of Business \& Economic Statistics},
  volume = {20},
  number = {2},
  pages = {147--162},
  publisher = {Taylor \& Francis},
  issn = {0735-0015},
  doi = {10.1198/073500102317351921}
}

@article{Stock2002a,
  title = {Forecasting {{Using Principal Components From}} a {{Large Number}} of {{Predictors}}},
  author = {Stock, James H and Watson, Mark W},
  year = {2002},
  journal = {Journal of the American Statistical Association},
  volume = {97},
  number = {460},
  pages = {1167--1179},
  publisher = {Taylor \& Francis},
  issn = {0162-1459},
  doi = {10.1198/016214502388618960}
}

@book{Vaart_1998, place={Cambridge}, 
    series={Cambridge Series in Statistical and Probabilistic Mathematics}, title={Asymptotic Statistics}, 
    publisher={Cambridge University Press, Cambridge}, 
    author={Van der Vaart, A. W. }, 
    year={1998}, 
    collection={Cambridge Series in Statistical and Probabilistic Mathematics}
}

@article{Wu2018,
  title = {Eigenvalue Difference Test for the Number of Common Factors in the Approximate Factor Models},
  author = {Wu, Jianhong},
  year = {2018},
  journal = {Economics Letters},
  volume = {169},
  pages = {63--67},
  issn = {01651765},
  doi = {10.1016/j.econlet.2018.05.009}
}

@article{Xia2018,
  title = {Determining the Number of Factors for High-Dimensional Time Series},
  author = {Xia, Qiang and Liang, Rubing and Wu, Jianhong and Wong, Heung},
  year = {2018},
  journal = {Statistics and Its Interface},
  volume = {11},
  number = {2},
  pages = {307--316},
  issn = {19387989, 19387997},
  doi = {10.4310/SII.2018.v11.n2.a8}
}

\end{document}